 \newcommand{\affaddr}[1]{#1}
 \newcommand{\category}[1]{\noindent\textbf{Category:} #1. }
 \newcommand{\terms}[1]{\newline \textbf{Terms:} #1.\\}
 \newcommand{\keywords}[1]{\textbf{Keywords:} #1.\\}
\date{}
\definecolor{darkgreen}{rgb}{0,.35,0}
\definecolor{darkblue}{rgb}{0,0,.5}
\definecolor{darkred}{rgb}{.6,0,0}
\newtheorem{theorem}{Theorem}
\newtheorem{lemma}[theorem]{Lemma}
\newtheorem{corollary}[theorem]{Corollary}
\newtheorem{definition}[theorem]{Definition}
\newtheorem{algorithm}{Algorithm}
\newtheorem{conjecture}[theorem]{Conjecture}
\DeclareMathOperator{\rank}{rank}
\DeclareMathOperator{\diag}{diag}
\DeclareMathOperator{\quo}{quo}
\DeclareMathOperator{\nequiv}{\mskip4mu\not\equiv\mskip4mu}
\DeclareMathOperator{\prob}{prob}
\newcommand{\proj}{\varphi}
\newcommand{\A}{\mathfrak{A}}
\newcommand{\B}{\mathfrak{B}}
\newcommand{\what}{\widehat{w}}
\newcommand{\vhat}{\widehat{v}}
\newcommand{\Ahat}{\widehat{A}}
\newcommand{\Atil}{\widetilde{A}}
\DeclareMathOperator{\M}{\mathsf{M}}
\newcommand{\GR}{{\mathsf{GR}}}
\newcommand{\bigO}{{\mathcal{O}}}
\newcommand{\softO}{{\bigO\mskip1mu\tilde{\,}\mskip1mu}}
\newcommand{\ZZ}{{\mathbb{Z}}}
\newcommand{\QQ}{{\mathbb{Q}}}
\newcommand{\CC}{{\mathbb{C}}}
\renewcommand{\L}{{\mathsf{L}}} 
\newcommand{\F}{{\mathsf{F}}}
\newcommand{\R}{{\mathsf{L}}}
\newcommand{\Fz}{{\F[z]}}
\newcommand{\Fzfe}{{\Fz /(f^e)}}
\newcommand{\xone}[1]{#1} %
\newcommand{\Ln}{{\L^{\xone{n}}}}
\newcommand{\Lmn}{{\L^{m \times n}}}
\newcommand{\Lnk}{{\L^{n \times k}}}
\newcommand{\Lnn}{{\L^{n \times n}}}
\newcommand{\Pden}{{nde}} %
\newcommand{\Fdenden}{{\F^{\Pden \times \Pden}}}
\newcommand{\ZZpe}{{\ZZ_{p^e}}}
\newcommand{\nxn}{{n\times n}}
\newcommand{\lxl}{{\ell\times \ell}}
\newcommand{\calC}{{\mathcal{C}}}
\newcommand{\DD}{{\mathfrak{D}}}
\newcommand{\TT}{{\mathfrak{T}}}
\begin{document}

\title{Fast Computation of Smith Forms of Sparse Matrices Over Local Rings
\thanks{B. D. Saunders is supported by National Science Foundation Grants CCF-0830130, CCF-1018063. M. Elsheikh, M. Giesbrecht, and A. Novocin are supported by the Natural Sciences and Engineering Research Council of Canada, and MITACS (Canada).}}

\author{Mustafa Elsheikh, Mark Giesbrecht, Andy Novocin\\
    \affaddr{Cheriton School of Computer Science}\\
    \affaddr{University of Waterloo, Canada}
    \and
    B. David Saunders\\
    \affaddr{Department of Computer and Information Sciences}\\
    \affaddr{University of Delaware, USA}
}

\maketitle

\clearpage

\begin{abstract}
We present algorithms to compute the Smith Normal Form of matrices over two families of local rings.  
The algorithms use the \emph{black-box} model which is suitable for sparse and structured matrices.
The algorithms depend on a number of tools, 
such as matrix rank computation over finite fields,
for which the best-known time- and memory-efficient algorithms are probabilistic.

For an $\nxn$ matrix $A$ over the ring $\Fzfe$, where $f^e$ is a power
of an irreducible polynomial $f \in \Fz$ of degree $d$, our algorithm
requires $\bigO(\eta de^2n)$ operations in $\F$, where our black-box is
assumed to require $\bigO(\eta)$ operations in $\F$ to compute a
matrix-vector product by a vector over $\Fzfe$ (and $\eta$ is assumed
greater than $\Pden$).  The algorithm only requires additional storage
for $\bigO(\Pden)$ elements of $\F$.  In particular, if $\eta=\softO(\Pden)$,
then our algorithm requires only $\softO(n^2d^2e^3)$ operations in
$\F$, which is an improvement on known dense methods for small $d$ and
$e$.

For the ring $\ZZ/p^e\ZZ$, where $p$ is a prime, 
we give an algorithm which is time- and memory-efficient when the number 
of nontrivial invariant factors is small.  
We describe a method for dimension reduction while preserving the invariant
factors.
The time complexity is essentially linear in $\mu n r e \log p,$ 
where $\mu$ is the number of operations in $\ZZ/p\ZZ$ to evaluate the
black-box (assumed greater than $n$) and $r$ is the total number of non-zero
invariant factors. 
To avoid the practical cost of conditioning, we give a Monte Carlo certificate,
which at low cost, provides either a high
probability of success or a proof of failure.
The quest for a time- and memory-efficient solution without restrictions on
the number of nontrivial invariant factors remains open. We offer a conjecture
which may contribute toward that end.
\end{abstract}

\bigskip

\category{G.4}{Mathematical Software}{Algorithm Design and Analysis}
\category{I.1.4}{Symbolic and Algebraic Manipulation}{Applications}
\terms{Algorithms, Complexity, Performance}
\keywords{Local Principal Ideal Ring, Sparse Matrix, Polynomial Matrix,
          Integer Matrix, Smith Form, Black Box}

\clearpage

\section{Introduction}
We consider the problem of computing the Smith Normal Form (SNF) of
sparse matrices over (commutative) local principal ideal rings (PIRs).
The Smith form is a diagonalization of matrices which has many
applications in diophantine analysis~\citep{Chou:1982},
integer programming~\citep{Hu:1969}, combinatorics~\citep{Wallis:1972},
determining the structure of Abelian groups \citep{Newman:1972} and
class groups~\citep{Hafner:1989}, computing Simplicial
Homology~\citep{Dumas:2003}, in system theory~\citep{Kailath:1980,
McMillan:1952}, and in the study of symplectic
spaces~\citep{Chandler:2010}.

The original work of \cite{Smith:1861} proved existence and uniqueness
of the SNF for integer matrices.  The generalization to PIRs is due to
\cite{Kaplansky:1949}.

The problem of computing the Smith form of a sparse matrix over a
principal ideal ring presents several challenges.  One approach is to
simply compute the SNF over the global ring (i.e., $\F[z]$ or $\ZZ$)
and then reduce the result modulo the power of the prime ideal. The algorithm of
\cite{Gie01} for SNF of a sparse matrix over $\ZZ$ could be used, but
the ultimate time requirement is essentially cubic (although space requirements
are lower). An asymptotically faster algorithm along similar lines,
but which requires considerably more space, is presented in~\citep{Eberly:2007}.
The best known algorithm for dense matrices over $\F[z]$
by \citep[Prop~7.16]{Storjohann:2000} requires time essentially equal
to matrix multiplication. However, it is not sensitive to sparsity.

On the other hand, computations over $\F[z]$ and $\ZZ$ suffer from
coefficient growth which is not clearly necessary in a PIR. For
example, over $\ZZ/p^2\ZZ$ where $p$ is a prime, one might hope to perform all
computations modulo $p^2$, and not with integers larger than $p^2.$
\cite{Arne:2003} provides a fast algorithm using elimination in a
PIR, but it is not sensitive to sparsity and requires time proportional
to matrix multiplication.  \cite{Wilkening:2011} demonstrates an
algorithm for dense polynomial matrices over local rings, but offers
no complexity analysis.  \cite{DSV01} give black-box algorithms over $\ZZ$ and
locally at a prime, which however do not have a benefit when only a few
invariant factors are nontrivial.  

When dealing with sparse matrices we would like to preserve the
sparsity of the input matrix, and introduce no fill-in.
Thus we pursue \emph{black-box} algorithms in the sense that the input
matrix is only used for matrix-vector products.
The complexity of black-box algorithms is thus expressed in terms of
the number of matrix-vector products used.  Space requirements are
kept to the storage of a few vectors.
There has been great success in applying black-box methods over finite
and arbitrary fields, starting with \cite{Wiedemann:1986}, where the
cost of many linear algebra problems has been reduced to computing a
linear number of matrix-vector products.  Our ultimate goal is then to
add local Smith form to that list.

Specifically we will consider the local Artinian principal rings (also
known as \emph{special principal rings})
$\ZZ/p^e\ZZ$ for a prime $p$ and positive exponent $e$, and 
$\F[z]/f^e\F[z]$, for irreducible $f \in \F[z]$.
Let $\L$ be a local Artinian principal ideal ring with a maximal prime ideal
$p\L$. For any matrix
$A \in \Lnn$, there exist unimodular matrices $U,V \in \Lnn$ and
a diagonal matrix $S \in \Lnn$ such that $A = USV$,
where
\begin{equation}\label{eqn:SmithDecomposition}
    S =
    \diag(\underbrace{1,\ldots,1}_{r_0},\underbrace{p,\ldots,p}_{r_1},\ldots,
    \underbrace{p^{e-1},\ldots,p^{e-1}}_{r_{e-1}},
    0,\ldots,0)
\end{equation}
\begin{definition}
$S$ is called the {\em Smith form} of $A,$ and the diagonal elements
are called the {\em invariant factors} of $A$.
\end{definition}
Our goal throughout this paper, is to compute the multiplicities of the Smith form
invariants, i.e, $\{r_0,r_1,\ldots,r_{e-1}\}$ for given black-box matrices,
and in particular, sparse matrices. 

\textbf{Our Contribution.}
For matrices over $\Fzfe,$ we present an algorithm which relies on
computing ranks of related black-box matrices over $\F$, and give a
complete complexity analysis.
The key idea of our algorithm is a linear representation of
polynomials in the ring $\Fzfe$ as matrices over $\F,$ and using rank
computations over $\F$ to discover the multiplicities of the Smith invariants.
This reduction allows us to take advantage of the well-studied efficient
algorithms for computing ranks of sparse matrices over fields rather than rings.
The cost of our algorithm is
$\bigO(\eta d e^2 n)$ operations in $\F,$ where each black-box evaluation costs
$\eta$ operations.
Our approach takes a path similar to
the linearization of \cite{Kaltofen:1990} for matrices over
$\F[z]$. This approach is also explored for dense
matrices over local rings by \cite{Wilkening:2011}. 
\citet{Dumas:2009} used rank computations to discover multiplicities
of characteristic polynomial factors for black-box matrices over fields.

The linearization idea, however, would not be applicable over the integers
since there are no appropriate
linear representations from $\ZZ/p^e\ZZ$ to $\ZZ^{\nxn}.$
Hence, it is necessary to develop different methods for the integer case.
A useful approach to Smith form computation is
to begin by determining which primes occur in the invariant factors
and then compute the form locally at those primes.  This has been done
in several recent algorithms \citep{Dumas:2003,Wan:2004}.  A fully
memory efficient, black-box algorithm for sparse and structured
matrices has not been given, however, for lack of an efficient black-box
algorithm for the Smith form locally at a prime.  

Toward that end we give a black-box algorithm over $\ZZ/p^e\ZZ$, whose cost
is essentially dominated by $\mu nek$ for an $n\times n$ sparse matrix
with $\mu$ nonzero entries, $e$ being the largest exponent of the
prime in the Smith form, and $k = \sum_{i=1}^{e-1} r_i$ 
being the number of nontrivial invariant factors ($r_0$ is {\em not} included).
It is to be expected that the cost depends on both $\mu n$ and $e.$
The dependence on $k,$ although unfortunate, is not completely unlikely.
It is natural that it is easier to find Smith form for
matrices with fewer number of non-trivial factors.
In addition, both $e$ and $k$ are small in many cases of interest.  Some
applications with this property are discussed at the beginning of
Section~\ref{Section:modpe}.
The key idea of this algorithm is to apply a reduction in dimension
to dispose
the zero invariant factors, compute a nullspace basis of the reduced matrix to
dispose the ones, and then determine the nontrivial invariant factors
by dense elimination methods. This idea is applicable to the polynomial case
as well. However, it is not interesting, since the complexity of this method 
has an extra factor of $k$ over the rank-based method presented in
Section~\ref{Section:Modfe}.

\textbf{Notation.} Throughout this paper $\F$ denotes a field and $\L$ denotes
a local ring. We use $\ZZ_p$ to denote $\ZZ/p\ZZ$ and
$\ZZ_{p^e}$ to denote $\ZZ/p^e\ZZ.$
 In the complexity analysis, we count the \emph{algebraic
  complexity} in the base field, i.e. we assume that all operations in
the base field have a unit cost. We use ``soft-Oh'' to hide the
logarithmic factors. We say that $f \in \softO(g)$ if there exists a
constant $c$ such that $f \in \bigO(g \log^c g).$ We use $\M(n)$ to
denote the number of operations in the base field required to multiply
two polynomials of degree at most $n$.  Finally, we use
$\bigO(n^\omega)$ to denote the matrix multiplication exponent, e.g.,
$\omega \le 2.372$ using~\cite{Coppersmith:1990}.

\textbf{Roadmap.} The rest of this paper is organized as follows.
Section~\ref{Section:Modfe} contains the
algorithm and analysis for Smith form over $\Fzfe$.
In Section~\ref{Section:modpe}, our
{\em nullspace} algorithm for Smith form over $\ZZpe$ is given after
some discussion of applications, a development of preconditioners for
the problem.  A Monte Carlo verification method is given that can be of use
with small primes.  A conjecture is also given, that may shed some
light on the problem when there are many nontrivial invariants.
Finally, Section~\ref{Section:Conclusion} is a brief summary.

\section{Smith form over {\large$\Fzfe$}}\label{Section:Modfe}
In this section we present an algorithm to compute the local Smith
form of a sparse polynomial matrix.
Throughout this section, let $\F$ be a field, $\L = \Fzfe$
where $f \in \Fz$ is irreducible of degree $d$, and $e \in \ZZ_{>1}$. 
The ideals in this ring are of the form $f^i\L$ for $0 \le i < e$ and the RHS of
equation~\eqref{eqn:SmithDecomposition} becomes 
\begin{equation}\label{eqn:feSmithForm}
   \diag(\underbrace{1,\ldots,1}_{r_0},
             \underbrace{f,\ldots,f}_{r_1}, \ldots,
             \underbrace{f^{e-1},\ldots,f^{e-1}}_{r_{e-1}},0,\ldots,0)
\end{equation}
Our goal is efficiently compute the
multiplicities: $\{r_0,$ $r_1,$ $\ldots,$ $r_{e-1}\}$.
The approach is to embed the ring $\Lnn$ in the ring $\Fdenden$ and
reduce the computation to finding ranks of matrices in the base field, $\F$,
where known fast black-box algorithms can be used.
\subsection{Embedding of {\large $\Lnn$ in $\Fdenden$}}
In this section, we describe the classical embedding of $\L$ into
$\F^{de\times de}$, and how properties of matrices over $\L$ are revealed by
their images over $\F$.  First, define the map $\varphi_e: \L \to \F^{de\times de}$
as follows.
Suppose $f^e=a_0+a_1z+\cdots+a_{de-1}z^{de-1}+z^{de},$ with a companion matrix
\[
C_{f^e}=
\begin{pmatrix}
  0 & 0 & \cdots & -a_0 \\
  1 & \ddots &  & -a_1 \\
  \vdots  & \ddots &  \ddots & \vdots \\
  0 & \cdots & 1 & -a_{de-1}
\end{pmatrix}.
\]
Define $\varphi_e(z)=C_{f^e}$, and $\varphi_e(z^i) = \varphi_e(z)^i$. By linearity,
extend $\varphi_e$ to all elements of $g = g_0+g_1z+\cdots+g_{de-1} z^{de-1} \in
\L$ such that $\varphi_e(g) \in \F^{de \times de}$:
\[
\varphi_e(g) = g(C_{f^e}) = g_0I+g_1C_{f^e}+g_2 C_{f^e}^2+\cdots +g_{de-1} C_{f^e}^{de-1}.
\]
It is straightforward to verify that $\varphi_e$ is a
ring isomorphism between $\L$ and $\F[C_{f^e}]$.

\begin{lemma}\label{lem:PhiRankfi}
  $\rank(\varphi_e(f^i)) = d(e-i)$ for $0\leq i\leq e$.
\end{lemma}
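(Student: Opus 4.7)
The plan is to interpret $\varphi_e$ not just as an abstract ring homomorphism but as the matrix of multiplication on $\L$ itself. Equip $\L$ with the $\F$-basis $\{1,z,z^2,\ldots,z^{de-1}\}$, which identifies $\L$ with $\F^{de}$ as an $\F$-vector space. Under this identification, multiplication by $z$ on $\L$ is exactly the companion matrix $C_{f^e}$ (since it shifts basis vectors and reduces $z^{de}$ by $f^e$). By $\F$-linearity and the definition of $\varphi_e$, the matrix $\varphi_e(g) = g(C_{f^e})$ is precisely the matrix, in the chosen basis, of the $\F$-linear endomorphism ``multiplication by $g$'' on $\L$.

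Consequently, $\rank(\varphi_e(f^i))$ equals the $\F$-dimension of the image of this multiplication map, which is the ideal $f^i\L \subseteq \L$. So the claim reduces to showing $\dim_\F(f^i\L) = d(e-i)$.

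To compute that dimension, I would use the rank--nullity theorem applied to the multiplication-by-$f^i$ map on $\L$. The kernel (annihilator) consists of $h \in \L$ with $f^i h \equiv 0 \pmod{f^e}$, i.e.\ $f^{e-i} \mid h$ in $\F[z]$, which is exactly the ideal $f^{e-i}\L$. The same argument applied to $f^{e-i}$ shows this kernel has dimension equal to $\dim_\F(f^{e-i}\L)$, so a more direct route is to observe that the surjection $\L \twoheadrightarrow f^i\L$, $h \mapsto f^i h$, induces an $\F$-vector space isomorphism $\L/f^{e-i}\L \cong f^i\L$. Since $\L/f^{e-i}\L \cong \Fz/(f^{e-i})$ as $\F$-algebras, it has $\F$-dimension $d(e-i)$, and the conclusion follows.

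There is no real obstacle here: the whole argument is a routine unwinding of definitions together with the standard fact that $\Fz/(f^j)$ has $\F$-dimension $dj$. The only point requiring (minor) care is checking that the companion matrix $C_{f^e}$ written with the $-a_i$'s in the last column really does implement multiplication by $z$ in the basis ordering used, so that $\varphi_e$ agrees with the multiplication representation rather than its transpose; if it turns out to be the transpose, the rank statement is unaffected.
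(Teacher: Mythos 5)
Your proof is correct, and it takes a genuinely more direct route than the paper's. Both arguments rest on the same starting observation: in the power basis $\{1,z,\ldots,z^{de-1}\}$, the matrix $\varphi_e(g)=g(C_{f^e})$ represents multiplication by $g$ on $\L$. The paper, however, exploits this only for $g=f$: it computes $\rank(\varphi_e(f))=de-d$ from the nullspace $f^{e-1}\L$, notes that $\varphi_e(f)$ is nilpotent with minimal polynomial $x^e$, concludes (somewhat tersely --- nullity $d$ forces $d$ Jordan blocks, and total size $de$ with maximal block size $e$ forces all blocks to have size exactly $e$) that $\varphi_e(f)\sim N_f$, and then reads off $\rank(N_f^i)$ from the block-shift structure. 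You instead compute $\rank(\varphi_e(f^i))=\dim_\F(f^i\L)$ directly for every $i$, via the $\F$-linear isomorphism $f^i\L\iso \L/f^{e-i}\L\iso \Fz/(f^{e-i})$ of dimension $d(e-i)$. Your route eliminates the similarity/minimal-polynomial step entirely and handles all $i$ (including the edge cases $i=0$ and $i=e$) uniformly; what it gives up is the explicit normal form $N_f$, which the paper does not in fact use anywhere else. Your closing caveat about the companion matrix versus its transpose is well placed and correctly dismissed, since transposition preserves rank.
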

\begin{proof}
  Since $f(C_{f^e})$ acts as multiplication by $f\bmod f^e$, it has
  null vectors which are images of polynomials in $f^{e-1}\L$.  This
  is a vector space of dimension $d$, and hence
  $\rank(f(C_{f^e}))=de-d$.  Also, $f(C_{f^e})$ has minimal polynomial
  $x^{e}$, whence
  \[
  \varphi_e(f) \sim N_f=
  \begin{pmatrix}
    0_d & I_d      & \cdots  & 0_d \\
    \vdots & \ddots &         & \vdots \\
    \vdots & \ddots &  \ddots & I_d \\
    0_d & \cdots & \cdots & 0_d
  \end{pmatrix},
  \]
  where $I_d,0_d\in\F^{d\times d}$ are identity and zero matrices respectively.  
The rank $\varphi_e(f^i)=\varphi_e(f)^i$ is now
evident from the structure of the nilpotent $N_f$.
\end{proof}

We extend the map $\varphi_e$ to $\nxn$ matrices over $\L$. For every
$A \in \Lnn$, $\varphi_e(A)$ is a $\Pden\times \Pden$ matrix over $\F$,
where every entry $a_{i,j}$ of $A$ is replaced by the $de\times de$
block $\varphi_e(a_{i,j})$. Applying $\varphi_e$
to~\eqref{eqn:feSmithForm}, we get
\begin{eqnarray}\label{eqn:PhifeSmithForm}
   \varphi_e(S) = \diag(\underbrace{\varphi_e(1),\ldots,\varphi_e(1)}_{r_0},
             \underbrace{\varphi_e(f),\ldots,\varphi_e(f)}_{r_1}, \ldots, \nonumber \\
             \underbrace{\varphi_e(f^{e-1}),\ldots,\varphi_e(f^{e-1})}_{r_{e-1}},0,\ldots,0) \in \Fdenden.
\end{eqnarray}

\begin{lemma}\label{lem:PhiUnimodular}
  If $U \in \Lnn$ is invertible, then $\varphi_e(U) \in \Fdenden$ is invertible.
\end{lemma}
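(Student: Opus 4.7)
The plan is to observe that the block-entry extension of $\varphi_e$ to matrices is itself a ring homomorphism $\Lnn \to \Fdenden$, and then invoke this homomorphism on the identity $U U^{-1} = I_n$. Since ring homomorphisms preserve units, this immediately yields invertibility of $\varphi_e(U)$.

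First I would record the multiplicativity of the block extension. For $A = (a_{ij}), B = (b_{ij}) \in \Lnn$, the $(i,j)$ block of $\varphi_e(AB)$ is
\[
\varphi_e\!\Bigl(\sum_k a_{ik} b_{kj}\Bigr) \;=\; \sum_k \varphi_e(a_{ik})\, \varphi_e(b_{kj}),
\]
where the equality uses the fact (already established when $\varphi_e$ was defined) that $\varphi_e$ is a ring homomorphism on $\L$. The right-hand side is exactly the $(i,j)$ block of the block matrix product $\varphi_e(A)\varphi_e(B)$, so $\varphi_e(AB) = \varphi_e(A)\varphi_e(B)$. Likewise $\varphi_e(I_n) = I_{nde}$ since $\varphi_e(1) = I_{de}$ and $\varphi_e(0) = 0_{de}$.

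Next, given invertibility of $U$ in $\Lnn$, let $V = U^{-1} \in \Lnn$ so that $UV = VU = I_n$. Applying $\varphi_e$ and using the previous paragraph gives
\[
\varphi_e(U)\,\varphi_e(V) \;=\; \varphi_e(UV) \;=\; \varphi_e(I_n) \;=\; I_{nde},
\]
and symmetrically $\varphi_e(V)\varphi_e(U) = I_{nde}$. Hence $\varphi_e(U)$ is invertible over $\F$ with inverse $\varphi_e(U^{-1})$.

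There is no real obstacle here; the only point requiring care is the bookkeeping that block multiplication of $de \times de$ blocks behaves the same way as scalar multiplication, which holds because the blocks $\varphi_e(a_{ij})$ all lie in the commutative subring $\F[C_{f^e}] \subseteq \F^{de \times de}$. Once that is noted, the statement is a one-line consequence of $\varphi_e$ being a ring homomorphism.
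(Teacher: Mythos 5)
Your proof is correct and takes essentially the same route as the paper: apply the (block-extended) ring homomorphism $\varphi_e$ to the identity $UW=I_n$ and note $\varphi_e(I_n)=I_{nde}$. The extra justification of block multiplicativity is fine (though commutativity of the blocks is not actually needed for it); the paper simply takes this for granted.
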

\begin{proof}
  If $U$ is invertible, then there exists a $W\in \Lnn$ such that $U W
  = I$, and $\varphi_e(U)\varphi_e(W) = \varphi_e(I)$. But
  $\varphi_e(I) = I_{\Pden}$, so $\varphi_e(U)$ has inverse $\varphi_e(W)$.
\end{proof}

We now establish the property relating multiplicities in the invariant factors
of $A\in\L^\nxn$ to the rank of $\varphi_e(\L).$
\begin{theorem}\label{thm:JordanRankProfile}
  Let $A\in\Lnn$ have Smith form in~\eqref{eqn:feSmithForm},
  then $\rank(\varphi_e(A))=der_0+d(e-1)r_1+\cdots +dr_{e-1}$.
\end{theorem}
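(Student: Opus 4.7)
The plan is to use the Smith decomposition $A = USV$ together with the two preceding lemmas. Since $A$ has Smith form as in \eqref{eqn:feSmithForm}, there exist unimodular $U, V \in \Lnn$ with $A = USV$. Applying the extended map $\varphi_e$ blockwise, I get $\varphi_e(A) = \varphi_e(U)\varphi_e(S)\varphi_e(V)$; this requires checking that the blockwise extension is still multiplicative, but that follows immediately because $\varphi_e$ is a ring homomorphism on scalars and block multiplication of the image matrices mirrors entrywise $\L$-multiplication of $U, S, V$.

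Next I would invoke Lemma~\ref{lem:PhiUnimodular} to conclude that $\varphi_e(U)$ and $\varphi_e(V)$ are invertible over $\F$, so multiplying by them preserves rank. Hence $\rank(\varphi_e(A)) = \rank(\varphi_e(S))$, and I have reduced the problem to computing the rank of the image of a diagonal matrix.

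From \eqref{eqn:PhifeSmithForm}, $\varphi_e(S)$ is block-diagonal with $r_i$ copies of the $de\times de$ block $\varphi_e(f^i)$ for $i = 0, 1, \ldots, e-1$, followed by zero blocks. The rank of a block-diagonal matrix is the sum of the ranks of the blocks, so
\[
\rank(\varphi_e(S)) = \sum_{i=0}^{e-1} r_i \cdot \rank(\varphi_e(f^i)).
\]
By Lemma~\ref{lem:PhiRankfi}, $\rank(\varphi_e(f^i)) = d(e-i)$, which gives exactly $der_0 + d(e-1)r_1 + \cdots + dr_{e-1}$, as claimed.

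There is no real obstacle here; the theorem is essentially a bookkeeping consequence of the two lemmas, with the only subtle point being the verification that extending $\varphi_e$ blockwise to matrices preserves products (and hence unimodular factors stay unimodular after extension). Once that is noted, the argument is a direct computation on the block-diagonal image of $S$.
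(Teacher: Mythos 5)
Your proof is correct and follows essentially the same route as the paper: use the Smith decomposition, apply $\varphi_e$ blockwise, invoke Lemma~\ref{lem:PhiUnimodular} to reduce to $\rank(\varphi_e(S))$, and then sum the block ranks via Lemma~\ref{lem:PhiRankfi}. Your version is in fact slightly more careful than the paper's, since you make the multiplicities $r_i$ explicit in the sum and note the (easy) verification that the blockwise extension of $\varphi_e$ is multiplicative.
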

\begin{proof}
  There exist unimodular matrices $U,V\in\Lnn$ such that $UAV=S$. By
  isomorphism, $\varphi_e(U)\varphi_e(A)\varphi_e(V)=\varphi_e(S)$.  By
  Lemma~\ref{lem:PhiUnimodular}, $\varphi_e(U),\varphi_e(V)$ are invertible and
  thus $\rank(\varphi_e(A))=\rank(\varphi_e(S))$.  In~\eqref{eqn:PhifeSmithForm},
  $\varphi_e(S)$ is a block diagonal matrix, so
  \[ 
  \rank(\varphi_e(S)) = \sum_{i=0}^{e-1}{\rank(\varphi_e(f^i))}. \]
  The proof then follows from Lemma~\ref{lem:PhiRankfi}.
\end{proof}

A consequence of Theorem~\ref{thm:JordanRankProfile} is that, for $1
\le \ell \le e$, we have $\rank(\varphi_\ell(A \bmod f^\ell)) = d\ell
r_0+d(\ell -1)r_1+\cdots +dr_{\ell -1}.$ For example,
$\rank(\varphi_1(A \bmod f)) = dr_0$.  In general, we have the
following corollary. The proof is left to the reader.
\begin{corollary}
Let $\rho_{\ell-1}$ denote $\rank(\varphi_e(A \bmod f^{\ell})),$ $1 \le \ell \le e$. Then
    \begin{equation}\label{eqn:PhiRankLinearSystem}
         \begin{pmatrix}
           d       & 0      & \cdots & 0\\
           2d      & d      & \cdots & 0\\
           \vdots  & \ddots & \ddots & \vdots\\
           ed      & \cdots & 2d     & d
         \end{pmatrix}
         \begin{pmatrix}
           r_0\\ r_1\\ \vdots\\ r_{e-1}
         \end{pmatrix}
         =
         \begin{pmatrix}
           \rho_0\\ \rho_1\\ \vdots\\ \rho_{e-1}
         \end{pmatrix}.
    \end{equation}
\end{corollary}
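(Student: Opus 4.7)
My plan is to derive each row of the linear system~\eqref{eqn:PhiRankLinearSystem} by specializing Theorem~\ref{thm:JordanRankProfile} to the reduced matrix $A \bmod f^\ell$ viewed as a matrix over the ring $\F[z]/(f^\ell)$, and then stack these equations for $\ell = 1, 2, \ldots, e$. In other words, the corollary is really just a matrix-form packaging of the scalar identity already stated immediately before it (with the map $\varphi_\ell$ in row $\ell$).

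First, I would check that if $A \in \Lnn$ has Smith form~\eqref{eqn:feSmithForm}, then $A \bmod f^\ell$ has Smith form over $\F[z]/(f^\ell)$ given by
\[
\diag(\underbrace{1,\ldots,1}_{r_0}, \underbrace{f,\ldots,f}_{r_1}, \ldots, \underbrace{f^{\ell-1},\ldots,f^{\ell-1}}_{r_{\ell-1}}, 0, \ldots, 0),
\]
where the trailing zero block absorbs the original nullity together with all the $r_i$ for $i \geq \ell$. This is because the natural surjection $\L \twoheadrightarrow \F[z]/(f^\ell)$ is a ring homomorphism that sends units to units (an element of $\L$ is a unit iff its constant-term residue mod $f$ is nonzero, a property preserved under further reduction), so the unimodular $U, V$ over $\L$ that diagonalize $A$ reduce to unimodular matrices over $\F[z]/(f^\ell)$, and $f^i \equiv 0 \pmod{f^\ell}$ for $i \geq \ell$.

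Next, I would apply Theorem~\ref{thm:JordanRankProfile} with $e$ replaced by $\ell$ to the matrix $A \bmod f^\ell$, obtaining
\[
\rho_{\ell-1} \;=\; d\ell\, r_0 \;+\; d(\ell-1)\, r_1 \;+\; \cdots \;+\; d\, r_{\ell-1}.
\]
The coefficients of $r_\ell, r_{\ell+1}, \ldots, r_{e-1}$ are zero because those invariants vanish in $\F[z]/(f^\ell)$ and so contribute nothing to the rank. Writing this identity as the $\ell$-th row of a matrix equation, for $\ell = 1, \ldots, e$, produces exactly the lower triangular system~\eqref{eqn:PhiRankLinearSystem}.

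The only substantive step is the verification of the Smith form of $A \bmod f^\ell$, and this is essentially immediate from the fact that reduction is a ring homomorphism preserving units, so no real obstacle arises; the rest is bookkeeping of the contributions of each invariant.
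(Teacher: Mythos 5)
Your proposal is correct and matches the route the paper intends (the paper leaves the proof to the reader but derives exactly this identity, $\rank(\varphi_\ell(A\bmod f^\ell))=d\ell r_0+\cdots+dr_{\ell-1}$, from Theorem~\ref{thm:JordanRankProfile} in the sentence preceding the corollary): reduce modulo $f^\ell$, note the truncated Smith form over $\F[z]/(f^\ell)$ since unimodularity survives the reduction, and apply the theorem with $e$ replaced by $\ell$. Your reading of the $\varphi_e$ in the statement as $\varphi_\ell$ is the right one, as the algorithm that follows confirms.
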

This system may be solved in linear time.  Let $\sigma_1 = \rho_1$ and
$\sigma_i = \rho_i - \rho_{i-1}$, for $1 < i < e$.  Then the
$\sigma_i$ are the prefix sums of the $r_1$, so $r_1 = \sigma_1$ and
$r_i = \sigma_i - \sigma_{i-1}$, for $1 < i < e$.

Next we consider how to efficiently compute $\{ \rho_0,$ $\rho_1,$
 $\ldots,$ $\rho_{e-1} \}$ for a given black-box matrix.

\subsection{Black-box for the embedding}

Given a black-box for $A \in \Lnn$, over $\L=\F[z]/(f^e)$, we can
easily construct a black-box for $\varphi_\ell(A\bmod
f^{\ell})$, for all $\ell \leq e$, at not much higher cost.

First, we define the black-box model cost model, and then show how to
perform black-box computations under $\varphi_e$ transformations
efficiently.

\begin{definition}\label{def:BlackBox}
  Let $A \in \Lnn$ be a sparse matrix over $\L=\F[z]/(f^e)$, where
  $f\in\F[z]$ is monic and irreducible of degree $d$.  The {\em black-box}
  for $A$ is a mapping $\Ln \to \Ln$ such that for all $v \in
  \Ln$, $Av \in \Ln$ can be computed with $\eta$ operations in $\F$.
  We assume throughout that $\eta\geq \Pden$.
\end{definition}

\begin{lemma}\label{lem:PhiBlackBox}
  Suppose we are given a black-box for $A\in\L^\nxn$, where
  $\L=\F[z]/(f^e)$ as above.  Let $\ell\in\{1,\ldots,e\}$ and
  $\vhat\in\F^{d\ell n}$ with unique pre-image $v\in\F[z]/(f^\ell)$.
  Then we can compute $\varphi_\ell(A\bmod f^\ell)\vhat\in\F^{d\ell
    n}$ with $\bigO(\eta+n\M(de))$ operations in $\F$.
\end{lemma}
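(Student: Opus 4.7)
The plan is to reduce one application of $\varphi_\ell(A\bmod f^\ell)$ to $\vhat$ to a single black-box matrix-vector product by $A$ over $\L$, followed by a coordinatewise polynomial reduction modulo $f^\ell$.

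First, I would reinterpret the input: split $\vhat\in\F^{d\ell n}$ into $n$ blocks of length $d\ell$, and view each block as the coefficient vector (in the basis $1,z,\ldots,z^{d\ell-1}$) of one polynomial coordinate, yielding $v\in(\F[z]/(f^\ell))^n$. Since $\ell\le e$, each coordinate is also a valid element of $\L$ of degree strictly less than $d\ell\le de$, so $v$ admits a canonical lift $v'\in\L^n$ simply by padding coefficients with zeros. This reinterpretation and padding cost at most $\bigO(nde)$ operations, which is absorbed into the black-box cost since $\eta\ge\Pden$.

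Next, I apply the given black-box to obtain $w'=Av'\in\L^n$ at a cost of $\eta$ operations in $\F$. Finally, I reduce each of the $n$ polynomial coordinates of $w'$, each of degree less than $de$, modulo $f^\ell$ via fast Euclidean division. This costs $\bigO(\M(de))$ per entry and thus $\bigO(n\M(de))$ in total, giving the total bound $\bigO(\eta+n\M(de))$ claimed in the statement.

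For correctness I would invoke the defining property of $\varphi_\ell$ already established above: since $C_{f^\ell}$ is the matrix of multiplication by $z$ on $\F[z]/(f^\ell)$ relative to the chosen basis, $\varphi_\ell(g)$ acts on $\F^{d\ell}$ as multiplication by $g$. Extended blockwise, $\varphi_\ell(A\bmod f^\ell)\vhat$ is exactly the coefficient vector of $(A\bmod f^\ell)v$. Since $v'\equiv v\pmod{f^\ell}$ by construction, the reduced black-box output satisfies $Av'\bmod f^\ell=(A\bmod f^\ell)v$, so its coefficient vector is the required $\what$. The only nonroutine step is the use of fast polynomial division to reach the $\M(de)$ factor rather than the naive $d^2e^2$; this is not really an obstacle but is the one place where a nontrivial algorithm is invoked.
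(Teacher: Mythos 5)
Your proposal is correct and follows essentially the same route as the paper: reinterpret $\vhat$ as a vector of polynomials, lift (trivially) to $\L^n$, apply the given black-box once at cost $\eta$, and reduce each of the $n$ output coordinates modulo $f^\ell$ at cost $\bigO(\M(de))$ each. You simply make explicit two points the paper leaves implicit, namely the zero-padding lift from $\F[z]/(f^\ell)$ to $\L$ and the correctness argument that $\varphi_\ell(g)$ acts as multiplication by $g$ so that $Av'\bmod f^\ell=(A\bmod f^\ell)v$.
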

\begin{proof}
  Assume that $\vhat \in \F^{d\ell n}$ is labelled as:
  \[
  \vhat=(\vhat_{1,0},\ldots,\vhat_{1,d\ell-1}, \vhat_{2,0},\ldots,\vhat_{2,d\ell-1},
  \ldots, \vhat_{n,0},\ldots,\vhat_{n,d\ell-1}).
  \]
  Construct the vector $v=(v_1,\ldots,v_{n})\in\Ln$, where
  $v_i=\sum_{0\leq j<d\ell} \vhat_{i,j}z^j\in\F[z]$. Now, compute
  $w=Av\bmod f^\ell\in\L^n$ using $\eta$ operations for the black-box
  evaluation plus $\bigO(n\M(de))$ operations in $\F$.  Let $w =
  (w_1,\ldots,w_{n})$. Assume $w_i=\sum_{0\leq
    j<d\ell}\what_{i,j}z^j\in\F[z]$.  Then
  \[
  \what=(\what_{1,0},\ldots,\what_{1,d\ell-1},
  \ldots, \what_{n,0},\ldots,\what_{n,d\ell-1}).
  \]
\end{proof}

Our algorithm for computing the Smith form of a matrix $A\in\Lnn$
given by a black-box is now straightforward. Using
Theorem~\ref{thm:JordanRankProfile} and Lemma~\ref{lem:PhiBlackBox} we
reduce the computation of $\rho_i$'s
in~\eqref{eqn:PhiRankLinearSystem} to computing ranks of matrices over
the ground field $\F$, which can be accomplished using existing fast and
memory-efficient black-box algorithms over fields, e.g. Wiedemann's
algorithm.

Algorithms for computing the rank of a black-box matrix over a field
are developed by \cite{Wiedemann:1986}, and refined in subsequent work
of \cite{Kaltofen:1991}, \cite{Eberly:2004}, and others.  If the input
matrix is in $\F^{n\times n}$ and the black-box evaluation requires $\eta$
operations in $\F$, then the rank algorithms require
$\softO(n \eta)$ operations in $\F$.  They are probabilistic, and
return the correct rank with controllably high probability on any
input.  We will assume that an appropriate choice of black-box rank
method is made, and note that there is considerable difference in
their effectiveness in practice and over various ground fields.

\begin{algorithm}\label{SmithFmodfe}
Smith invariants in $\L=\Fzfe,$ where $f \in \Fz$ is irreducible of degree $d.$\\
  Input: Black-box for $A \in \Lnn.$\\
  Output: $r_0,\dots,r_{e-1}$ such that $r_i$ is the multiplicity of $f^i$ in
          the Smith Form of $A,$ and the multiplicity of $0$ is $n - \sum_{i} r_i.$
\begin{enumerate}[itemsep=-2pt]
\item For all $\ell\in\{1,\ldots,e\}$, invoke a black-box rank algorithm on
  the black-box for $\varphi_\ell(A\bmod f^\ell):\F^{d\ell
    n}\to\F^{d\ell n}$.
  Let $\rho_{\ell-1}=\rank(\varphi_\ell(A\bmod f^\ell))$.
\item Solve~\eqref{eqn:PhiRankLinearSystem} for $r_0,\ldots,r_{e-1}$.
\item Return $r_0, \dots, r_{e-1}.$
\end{enumerate}
\end{algorithm}

\begin{theorem}
  Algorithm~\ref{SmithFmodfe} is correct, and requires $\softO(\eta d
  e^2 n)$ operations in $\F$.  The space requirement of the algorithm
  is $\bigO(d e n)$ elements in $\F$.
\end{theorem}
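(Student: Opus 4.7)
The plan is to verify the three claims of the theorem (correctness, time, and space) by chaining together the results already established in the section. For correctness, I would first observe that Lemma~\ref{lem:PhiBlackBox} produces, for each $\ell\in\{1,\ldots,e\}$, a bona fide black-box over $\F$ for the matrix $\varphi_\ell(A\bmod f^\ell)\in\F^{d\ell n\times d\ell n}$, so the black-box rank subroutine returns $\rho_{\ell-1}=\rank(\varphi_\ell(A\bmod f^\ell))$ correctly (with controllably high probability, since the subroutine is Monte Carlo). The corollary preceding the algorithm tells us that the vector $(\rho_0,\ldots,\rho_{e-1})$ equals the product of the lower-triangular matrix in~\eqref{eqn:PhiRankLinearSystem} with $(r_0,\ldots,r_{e-1})$. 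Since that triangular matrix has $d\ne 0$ on the diagonal, it is invertible and Step~2 recovers the $r_i$ uniquely; the remaining multiplicity $n-\sum_i r_i$ then counts the zero invariants, as desired.

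For the time bound I would analyze Step~1 by fixing $\ell$. By Lemma~\ref{lem:PhiBlackBox}, one matrix-vector product for $\varphi_\ell(A\bmod f^\ell)$ costs $\bigO(\eta+n\M(de))$ operations in $\F$; since $\M(de)=\softO(de)$ and $\eta\ge \Pden$, this simplifies to $\softO(\eta)$. The field-rank subroutine of Wiedemann (or a refinement) computes the rank of an $N\times N$ black-box matrix using $\softO(N)$ black-box calls and $\softO(N)$ additional field operations; here $N=d\ell n$. Thus the cost for a single $\ell$ is $\softO(d\ell n\cdot \eta)$, and summing over $\ell=1,\ldots,e$ yields
\[
\sum_{\ell=1}^{e} \softO(d\ell n\eta) \;=\; \softO\!\bigl(d e^2 n\,\eta\bigr),
\]
matching the claim. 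Step~2 solves an $e\times e$ triangular system in $\bigO(e)$ operations, which is absorbed.

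For the space requirement, note that constructing the linearized black-box in Lemma~\ref{lem:PhiBlackBox} only repackages a vector of length $d\ell n\le den$ entries of $\F$ into a vector of $n$ polynomials of degree $<d\ell$ and back, so it adds no asymptotic storage beyond the original black-box for $A$ (which uses $\bigO(\Pden)=\bigO(den)$ cells) plus a constant number of working vectors of length $den$. The black-box field-rank algorithm likewise stores only a constant number of length-$N$ vectors, with $N\le den$. Hence the total working space is $\bigO(den)$ elements of $\F$, as claimed.

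The only subtle point is the soft-Oh hiding of the probabilistic cost: each of the $e$ rank computations must succeed, so one should amplify the per-call success probability by a $\log e$ factor (e.g. by choosing the underlying random projections from a sufficiently large extension of $\F$, or by repeating). This only adds logarithmic factors, absorbed in the $\softO(\cdot)$ notation, and is the main non-mechanical step; the remainder of the argument is essentially bookkeeping on top of Theorem~\ref{thm:JordanRankProfile} and Lemma~\ref{lem:PhiBlackBox}.
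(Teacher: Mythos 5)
Your proposal is correct and follows the same route as the paper's (much terser) proof: correctness from Theorem~\ref{thm:JordanRankProfile}, Lemma~\ref{lem:PhiBlackBox}, and the triangular system~\eqref{eqn:PhiRankLinearSystem}, with the cost dominated by the $\bigO(de^2n)$ black-box evaluations of Step~1 and storage of $\bigO(den)$ field elements. Your accounting of the per-$\ell$ cost $\softO(d\ell n\,\eta)$ and the remark on amplifying the Monte Carlo success probability across the $e$ rank calls are just more explicit versions of what the paper leaves implicit.
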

\begin{proof}
  The correctness follows from the results and discussion in this
  section. We analyze the time and space complexity of step (1), which
  dominates.  This step requires $\bigO(de^2n)$ black-box evaluations, and
  storage for $\bigO(nde)$ elements in $\F$.
\end{proof}

\section{Smith form over {\large$\ZZpe$}}\label{sec:modpe}\label{Section:modpe}
In our experience, most Smith normal forms of integer matrices
in practice involve relatively few non-trivial
factors, i.e., most of the invariant factors are 1's or
0's. The algorithm of this section addresses that situation.

For example, in some work on computing homology of simplicial complexes, \cite{DSV00, DSV01, Babson:1999, Bjorner:1999}, large boundary matrices arose.  
One of the most challenging Smith forms to compute at the time was a 135135 by 270270 matrix which turned out to have 133991 ones, 220 3's, and 
924 zeroes as the invariants.  Other examples in that study were also
large but with even fewer nontrivial invariants.  Most often in
homology computation, the number of 1's (the Betti number) greatly
exceeds the number of non-trivial entries, it seems.

For another example, recently in the study of symplectic 3 spaces,
Smith form computations have been desired of some rather large
matrices \cite{Chandler:2010, Chandler:2011}.  For these matrices it
is conjectured that only one prime will appear in the invariant
factors (other than the largest) and indeed only the local Smith form
at that prime is of interest.  Furthermore, the conjectured structure
predicts only a few nontrivial invariant factors.  We denote these
examples as W3-$q$, where $q$ is a prime power and the Smith form
modulo $q$ is desired.  W3-$q$ is a $\{0,1\}$-matrix of size
approximately $q^3\times q^3$ with about $q^4$ nonzero entries.  The
current challenge is to compute Smith form of W3-64 and W3-81.  The
algorithm described here is designed to handle this case.

For a prime $p$ and an exponent $e \in \ZZ_{> 1}$, let %
$\proj$ be the natural projection $\ZZ_{p^e} \rightarrow \ZZ_p,$
which extends naturally to $\proj: \ZZ_{p^e}^\nxn \rightarrow \ZZ_{p}^\nxn$
by element-wise mapping.  
Note that $x \in \ZZ_{p^e}$ is a unit if and only if $\proj(x) \neq 0$.
Likewise, $A \in \ZZ_{p^e}^\nxn$ is unimodular if and only if $\proj(A)$ is unimodular.

\subsection{Nullspace method}
Let us introduce the approach by way of a sketched example.  
Suppose $A$ is a matrix over $\ZZ_{p^5}$.
Further suppose $A$ is 100 by 100 and 
\[A \sim \diag(1, 1, \ldots, 1,  p, p, p, p^3, p^4, 0, 0, \ldots, 0), \]
with 45 ones and 50 zeroes.
This approximates on a small scale the pattern of invariants we expect to see on W3-$q$.
First, a
reduction in dimension allows us to reduce $A$  
to an $\ell \times \ell$ matrix $\rho(A)$ having the same nonzero invariants,
where $\ell$ is the (max) rank, or slightly larger.  We illustrate with $\ell = 52$:
\[\rho(A) \sim S = \diag(1, 1, \ldots, 1,  p, p, p, p^3, p^4, 0, 0). \]

Over $\ZZ_p$, the nullspace basis $N'$ of $\proj(\rho(A))$ ($N'$ has 7 columns) is
equivalent to that of $S$.  Let $E'$ be the last 7 columns of the $52\times 52$ identity
matrix. Let $E$ and $N$ be arbitrary embeddings of $E'$ and $N'$ in 
$\ZZ_{p^5}^{52\times 7}$, such that $\proj(E) = E', \proj(N) = N'$.  Thus $\rho(A)N$ and
$SE$ are multiples of $p$ and
\[
\rho(A)N \sim SE = \diag(p, p, p, p^3, p^4, 0, 0). 
\]

In summary, the algorithm is to apply a reduction in dimension to dispose of
zeroes, compute nullspace basis $N$ to dispose of ones, and determine the nontrivial invariants by computing Smith form of $AN$ using dense methods.
$AN$ is an $n\times k$ matrix, where $k$ is the number of nontrivial invariants 
(or slightly larger -- $\ell$ - rank mod 2).

Reduction in dimension is a frequent tool and has been used for Smith form, for example, in \cite{DSV01}.  But then their computation proceeds without disposing of the unit invariant factors.  Thus the time complexities below, otherwise similar to theirs, differ in that we replace a rank factor $\ell$ by the number of nontrivial invariants, $k$.
\subsection{Probabilistic dimension reduction}
\label{sec:reduction}

Let $A\in\ZZ_{p^e}^\nxn$, for which we have a fast black-box. Let $A$ have a Smith form
$\diag(s_1,\ldots,s_r,0,\ldots,0)\in\ZZ_{p^e}^\nxn$.  Our goal in this subsection
is to construct $\rho(A)$.  That is, given such an $A$ and a $\ell\in\{1,\ldots,n\}$, to construct a black-box of similar cost for a matrix $B\in\ZZ_{p^e}^\lxl$ which has Smith form $\diag(s_1,\ldots,s_\ell)$,  i.e., with the initial invariant factors of $A$.

Notationally, for integers $n$ and $k\leq n$, let $\calC_k^n$ be the set of
$k$-tuples of distinct elements (in increasing order) of $\{1,\ldots,n\}$.  For a matrix
$B\in\Lnn$, and $\sigma,\tau\in\calC_k^n$, define $B\binom{\sigma}{\tau}$
as the $(\sigma,\tau)$ minor of $B$, i.e., the determinant of the $k \times k$ 
submatrix of $B$ with rows from $\sigma$ and columns from $\tau$.  We use script 
letters, e.g. $\DD, \TT$, to denote matrices with indeterminate entries.

We use techniques similar to that derived in \cite{Gie01} with
\emph{scaled Toeplitz matrix conditioners}. For indeterminates
$\Lambda=\{v_i,w_i,y_i\},$ let $\DD_1 = \diag(v_1,\ldots,v_n),$
$\DD_2 = \diag(w_1,\ldots,w_n),$ and $\TT$ be a generic Toeplitz matrix:
\begin{equation}
\label{eq:precon}
\TT = \begin{pmatrix} 
        y_n & y_{n+1} & \cdots & y_1 \\
        \vdots & y_n & \ddots & \vdots\\
        y_{n-2} & & \ddots & y_{n+1} \\
        y_{2n-1} & y_{2n-2} & \cdots & y_n
        \end{pmatrix}
\end{equation}

\begin{lemma}
  Let $\B=\DD_1\TT\DD_2$ in the indeterminates $\Lambda$, as in
  \eqref{eq:precon}.  Let $k\in\{1,\ldots,n\}$ and
  $\sigma=(\sigma_1,\ldots,\sigma_k)$,
  $\tau=(\tau_1,\ldots,\tau_k)\in\calC_k^n$.
  \begin{itemize}[itemsep=-2pt]
  \item[(i)] $\TT\binom{\sigma}{\tau}\in\ZZ[\Lambda]$ has content $1$;
  \item[(ii)]
    $\B\binom{\sigma}{\tau}=v_{\sigma_1}\cdots v_{\sigma_k} w_{\tau_1}\cdots w_{\tau_k}\TT\binom{\sigma}{\tau}$.
  \end{itemize}
\end{lemma}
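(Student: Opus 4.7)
The lemma has two parts; (ii) is immediate by multilinearity, while (i) contains the real work. For (ii), note that the $(a,b)$ entry of $\B=\DD_1\TT\DD_2$ is $v_a w_b\,\TT_{a,b}$, so in the $k\times k$ submatrix indexed by $\sigma,\tau$ every entry of row $a$ carries the common factor $v_{\sigma_a}$ and every entry of column $b$ carries $w_{\tau_b}$. Pulling these scalars out of each row and column in the Leibniz expansion yields $\B\binom{\sigma}{\tau}=v_{\sigma_1}\cdots v_{\sigma_k}\,w_{\tau_1}\cdots w_{\tau_k}\,\TT\binom{\sigma}{\tau}$, exactly the claim.

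For (i), my plan is to exhibit a single monomial in the expansion of $\TT\binom{\sigma}{\tau}\in\ZZ[\Lambda]$ whose coefficient is $\pm 1$; the content must then divide $1$. Using the Toeplitz indexing $\TT_{i,j}=y_{n+i-j}$, the Leibniz expansion of the minor reads
\[
\TT\binom{\sigma}{\tau} \;=\; \sum_{\pi\in S_k}\operatorname{sgn}(\pi)\prod_{i=1}^{k} y_{n+\sigma_i-\tau_{\pi(i)}},
\]
and two permutations contribute to the same monomial iff they produce the same multiset of subscripts $\{\sigma_i-\tau_{\pi(i)}:i\}$ (a global shift by $n$ being harmless). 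The candidate I single out is the diagonal monomial $M_0=\prod_i y_{n+\sigma_i-\tau_i}$ coming from $\pi=\mathrm{id}$.

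The crux, and the main obstacle, is to show that no other permutation contributes to $M_0$. The plan is to apply the power sum $f(x)=x^2$ to both multisets: if $\{\sigma_i-\tau_{\pi(i)}\}=\{\sigma_i-\tau_i\}$, then summing $f$ over both sides and cancelling $\sum_i\sigma_i^2$ together with $\sum_i\tau_{\pi(i)}^2=\sum_i\tau_i^2$ (the latter holds because $\pi$ permutes the indices) reduces the equality to $\sum_i\sigma_i\tau_{\pi(i)}=\sum_i\sigma_i\tau_i$. Because $\sigma,\tau\in\calC_k^n$ are strictly increasing, the strict form of the rearrangement inequality forces $\pi=\mathrm{id}$, so $M_0$ appears in $\TT\binom{\sigma}{\tau}$ with coefficient $\operatorname{sgn}(\mathrm{id})=+1$ and (i) follows.
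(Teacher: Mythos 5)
Your proof is correct. Part (ii) is handled essentially as in the paper: the paper invokes the Cauchy--Binet formula, but since $\DD_1,\DD_2$ are diagonal this collapses to exactly your row/column scalar extraction, so the two are the same computation. The interesting divergence is in part (i), which the paper does not prove at all --- it simply cites Lemma~1.3 of \cite{Gie01}. Your argument is a clean, self-contained substitute: the monomial attached to a permutation $\pi$ is determined by the multiset of differences $\{\sigma_i-\tau_{\pi(i)}\}$, equality of that multiset with the diagonal one forces equality of the second power sums, the $\sum\sigma_i^2$ and $\sum\tau_{\pi(i)}^2=\sum\tau_i^2$ terms cancel, and the strict rearrangement inequality (valid because $\sigma,\tau\in\calC_k^n$ have strictly increasing, hence distinct, entries) pins down $\pi=\mathrm{id}$. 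Thus the diagonal monomial has coefficient exactly $+1$ and the content is $1$. This buys independence from the external reference at the cost of a short combinatorial argument; it is also robust to the (slightly inconsistent) indexing in the paper's display of $\TT$, since it only uses that $\TT_{i,j}$ is a distinct indeterminate for each value of $i-j$. The only stylistic caveat is that you should state explicitly that a single $\pm1$ coefficient forces the integer content (a GCD of coefficients) to be $1$, which you do implicitly; no gap remains.
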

\begin{proof}
  Part (i) is from \cite[Lemma 1.3]{Gie01} and part (ii) follows easily from the
  Cauchy-Binet formula.
\end{proof}
Note that $\B\binom{\sigma}{\tau}$ uniquely identifies which minor of $\B$
was selected.

\begin{lemma}
  \label{lem:ddord}
  Let $A\in\ZZ^\nxn$, and $\B_1$, $\B_2$ be $n\times n$ matrices of distinct
  indeterminates from a set $\Lambda$, of the form \eqref{eq:precon}. Then
  $\A=\B_1 A \B_2$ is such that for $1\leq k\leq n$, the content of
  $\psi_k=\A\binom{1\ldots k}{1\ldots k}\in\ZZ[\Lambda]$ equals $\Delta_k$,
  the $k$th determinantal divisor of $A$.
\end{lemma}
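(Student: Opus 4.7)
The plan is to expand the leading principal $k \times k$ minor of $\A$ by the Cauchy--Binet formula and then read off its integer content, exploiting the fact that the diagonal indeterminates of $\B_1$ and $\B_2$ tag each summand with a unique monomial that can be tracked independently of the Toeplitz ($y$) variables.

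Applying Cauchy--Binet twice to $\A = \B_1 A \B_2$, restricted to rows and columns $1,\ldots,k$, gives
\[
\psi_k \;=\; \sum_{\sigma,\tau\in\calC_k^n} \B_1\binom{1\ldots k}{\sigma}\, A\binom{\sigma}{\tau}\, \B_2\binom{\tau}{1\ldots k}.
\]
Writing the indeterminates of $\B_j$ with a superscript $(j)$ and applying part~(ii) of the previous lemma,
\[
\B_1\binom{1\ldots k}{\sigma} = \Bigl(\prod_{i=1}^k v^{(1)}_i\Bigr)\Bigl(\prod_{i=1}^k w^{(1)}_{\sigma_i}\Bigr) \TT_1\binom{1\ldots k}{\sigma},
\]
and symmetrically for $\B_2\binom{\tau}{1\ldots k}$. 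The tagging monomial $m_{\sigma,\tau} = \prod_i w^{(1)}_{\sigma_i} \cdot \prod_j v^{(2)}_{\tau_j}$ is squarefree in disjoint variable sets and encodes $\{\sigma_i\}, \{\tau_j\}$ uniquely, so the $m_{\sigma,\tau}$ are pairwise distinct; since the $\TT_j$-minors contribute only $y^{(j)}$ variables, the full monomial supports of the $(\sigma,\tau)$-summands are pairwise disjoint.

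Because the summands have disjoint monomial supports, the content of $\psi_k$ equals $\gcd_{\sigma,\tau}$ of the content of each individual summand. After factoring out the monomial $\prod_i v^{(1)}_i \cdot \prod_i w^{(2)}_i \cdot m_{\sigma,\tau}$, the $(\sigma,\tau)$-summand reduces to $A\binom{\sigma}{\tau}\cdot \TT_1\binom{1\ldots k}{\sigma}\TT_2\binom{\tau}{1\ldots k}$, a polynomial in the $y^{(1)}, y^{(2)}$ indeterminates. By part~(i) each $\TT_j$-minor has content $1$, and since the $y^{(1)}$ and $y^{(2)}$ variables are disjoint, Gauss's lemma yields content $1$ for the product of the two Toeplitz minors. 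Hence each summand has content $\lvert A\binom{\sigma}{\tau}\rvert$, and
\[
\operatorname{content}(\psi_k) \;=\; \gcd_{\sigma,\tau \in \calC_k^n} \lvert A\binom{\sigma}{\tau}\rvert \;=\; \Delta_k.
\]

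The delicate step is ruling out the possibility that cancellations between summands inflate the content beyond $\Delta_k$; without the disjointness of the indeterminate sets of $\B_1$ and $\B_2$, or without the uniqueness of the tagging monomials $m_{\sigma,\tau}$, the computation would need to reason about sums of multiples of $A$-minors rather than each term in isolation. The hypothesis of distinct indeterminates combined with the structural identification provided by the previous lemma is exactly what guarantees the required disjointness.
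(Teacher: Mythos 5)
Your proof is correct and follows the same route as the paper's: a Cauchy--Binet expansion of $\psi_k$, the observation that the diagonal indeterminates tag each $(\sigma,\tau)$-summand with a distinct squarefree monomial so the supports are disjoint, and the content-$1$ property of the Toeplitz minors to conclude that the content of the sum is the GCD of the $k\times k$ minors of $A$. The paper's own proof is just a terser version of this; your explicit tracking of the tagging monomials $m_{\sigma,\tau}$ fills in exactly the detail the paper leaves to the remark preceding the lemma.
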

\begin{proof}
  By the Cauchy-Binet formula we have
  \[
  \A\binom{1\ldots k}{1\ldots k}
  = \sum_{\sigma,\tau\in\calC_k^n}
  \B_1\binom{1\ldots k}{\sigma} 
  A\binom{\sigma}{\tau}
  \B_2\binom{\tau}{1\ldots k}.
  \]
  Thus $\A\binom{1\ldots k}{1\ldots k}$ is a sum of polynomials of content
  1, with distinct indeterminates, one for each $k\times k$ minor of $A$, times
  the value of that minor.  Hence it must have content equal to the GCD of all
  $k\times k$ minors of $A$, which is equal to the $k$th determinantal
  divisor.
\end{proof}

\begin{theorem}
  \label{thm:condworks}
  Let $A\in\ZZ^\nxn$, $p\geq 6n^2\xi$ a prime, and $\xi\geq 2$.  Let
  $B_1,B_2\in\ZZ^\nxn$ be formed by a random assignment of variables in
  $\B_1,\B_2$ in \eqref{eq:precon} respectively, where choices are made
  uniformly from $L=\{0,\ldots,6n^2\xi - 1\}$, and $\Ahat=B_1AB_2$.  Then with
  probability at least $1-1/\xi$, for \emph{all} $1\leq k\leq n$, the order of
  $p$ in $\Delta_k$, the $k$th determinantal divisor of $A$ 
  equals the order of $p$ in $\Ahat\binom{1\ldots k}{1\ldots k}$.
\end{theorem}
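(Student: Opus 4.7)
The plan is to combine Lemma~\ref{lem:ddord} with a Schwartz--Zippel argument modulo $p$, then take a union bound over $k$. First I would observe that by Lemma~\ref{lem:ddord}, the polynomial $\psi_k = \mathfrak{A}\binom{1\ldots k}{1\ldots k} \in \ZZ[\Lambda]$ has content exactly $\Delta_k$, so we may factor
\[
\psi_k(\Lambda) \;=\; \Delta_k\cdot q_k(\Lambda),
\]
where $q_k \in \ZZ[\Lambda]$ is a polynomial of content $1$. In particular, since $p$ does not divide every coefficient of $q_k$, the reduction $\bar{q}_k \in \F_p[\Lambda]$ is a nonzero polynomial over $\F_p$. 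Upon specialization, $\Ahat\binom{1\ldots k}{1\ldots k} = \Delta_k\cdot q_k(B_1,B_2)$, so as long as $q_k(B_1,B_2) \nequiv 0 \pmod p$, the $p$-adic valuations on the two sides of the desired equality agree. The entire theorem therefore reduces to showing that, with probability at least $1-1/\xi$, the random specialization avoids the zero set of $\bar{q}_k$ for every $k$ simultaneously.

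Next I would bound $\deg q_k$. Each entry of $\mathfrak{B}_i = \DD_1 \mathfrak{T} \DD_2$ is a monomial of total degree $3$ in the variables $\{v_j, y_j, w_j\}$, so a $k\times k$ minor of $\mathfrak{B}_i$ has total degree $3k$. Expanding $\mathfrak{A}\binom{1\ldots k}{1\ldots k}$ by the Cauchy--Binet formula used in Lemma~\ref{lem:ddord} as a sum of products $\mathfrak{B}_1\binom{1\ldots k}{\sigma}\, A\binom{\sigma}{\tau}\, \mathfrak{B}_2\binom{\tau}{1\ldots k}$ (the middle factor being a scalar), we conclude $\deg \psi_k \le 6k$, and hence $\deg q_k \le 6k \le 6n$.

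With these two ingredients I would apply Schwartz--Zippel. Because $p \ge 6n^2\xi = |L|$, the natural map $L \hookrightarrow \F_p$ is injective, so choosing variables uniformly from $L$ is equivalent, for the purpose of computing $\Pr[\bar{q}_k = 0]$, to choosing them uniformly from a set of size $|L|$ inside $\F_p$. Since $\bar{q}_k$ is a nonzero polynomial over $\F_p$ of total degree at most $6k$, Schwartz--Zippel gives
\[
\Pr\bigl[q_k(B_1,B_2) \equiv 0 \pmod p\bigr] \;\le\; \frac{6k}{6n^2\xi} \;\le\; \frac{1}{n\xi}.
\]
A union bound over $k = 1,\ldots,n$ yields a total failure probability of at most $1/\xi$, establishing the claim.

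The only substantive obstacle is the degree bound: one has to see that the appropriate polynomial to feed into Schwartz--Zippel is the cofactor $q_k$ (not $\psi_k$), that $q_k$ survives reduction mod $p$ because its content is $1$, and that its degree is controlled by the structure of the scaled Toeplitz preconditioners rather than by the possibly large integer $\Delta_k$. Once those are observed, the rest is a routine Schwartz--Zippel plus union-bound calculation.
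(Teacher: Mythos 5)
Your proof is correct and follows essentially the same route as the paper's: invoke Lemma~\ref{lem:ddord} to identify the content of $\psi_k$ with $\Delta_k$, bound $\deg(\psi_k/\Delta_k)\leq 6k$, apply Schwartz--Zippel to the content-one cofactor modulo $p$ (using $p\geq|L|$), and combine over all $k$. Your union bound is in fact slightly cleaner than the paper's product $(1-1/(n\xi))^n$, since it does not implicitly appeal to independence of the events, but both yield the stated $1-1/\xi$.
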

\begin{proof}
  Let $\psi_k$ be as in Lemma \ref{lem:ddord}, which has content equal to the
  $k$th determinantal divisor $\Delta_k$ of $A$.  Observe from our
  construction that $\deg\psi_k\leq 6k\leq 6n$ (the total degree of a $k \times k$ 
  minor of an indeterminate Toeplitz is $ \leq k$).  Thus, with values selected as
  described, by the Schwartz-Zippel Lemma~\citep{Zippel:1979,Schwartz:1980}, we have that
$\psi_k/\Delta_k$ is a polynomial in the entries of matrices $B_1,B_2$ and
  \[
  \prob\left\{ \left( \psi_k/\Delta_k\right)\nequiv 0\bmod p\right\}
  \geq 1-\frac{6n}{6n^2\xi}.
  \]
  This is exactly the probability that the order of $p$ in $\Delta_k$ equals
  the order of $p$ in the leading $k\times k$ minor of $\Ahat$.  The
  probability that this happens for all $k$, from $1\leq k\leq n$ is at least
  $(1-1/(n\xi)))^n\geq 1-1/\xi$.
\end{proof}

\begin{corollary}
  \label{cor:leadingsmith}
  Let $p\geq 6n^2\xi$ be prime, for a $\xi>1$, and $e\geq 1$, and suppose
  $A\in\ZZ_{p^e}^\nxn$ has (local) Smith form
  $\diag(s_1,\ldots,s_n)\in\ZZ_{p^e}^\nxn$.  Let $B_1,B_2\in\ZZ_{p^e}^\nxn$ be
  formed by a random assignments of variables in $\B_1,\B_2\in\ZZ^\nxn$ in
  \eqref{eq:precon} respectively, where choices are made uniformly from
  $L=\{0,\ldots,6n^2\xi - 1\}\bmod p^e$.  Let $\Ahat=B_1AB_2\in\ZZ_{p^e}^\nxn$, and
  for $1\leq k\leq n$ let $\Ahat_k$ be the leading $k\times k$ submatrix of
  $\Ahat$.  Then with probability at least $1-1/\xi$, for \emph{all}
  $k\in\{1,\ldots, n\}$, the local Smith form of $\Ahat_k$ is
  $\diag(s_1,\ldots,s_k)\in\ZZ_{p^e}^{k\times k}$.
\end{corollary}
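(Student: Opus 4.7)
The plan is to lift to $\ZZ$, invoke Theorem~\ref{thm:condworks}, and then deduce the corollary from a short sandwich of divisibility relations among determinantal divisors of the submatrices $\tilde{\Ahat}_k$. I would fix any integer lift $\tilde{A}\in\ZZ^\nxn$ of $A$, and regard $B_1,B_2$ as integer matrices with entries in $L$, so that $\tilde{\Ahat}=B_1\tilde{A}B_2\in\ZZ^\nxn$ lifts $\Ahat$ and its leading $k\times k$ submatrix $\tilde{\Ahat}_k$ lifts $\Ahat_k$.  Writing $v_p$ for $p$-adic valuation and setting $w_j=v_p(\Delta_j(\tilde{A}))$, the hypothesis that $A$ has local Smith form $\diag(s_1,\ldots,s_n)$ over $\ZZ_{p^e}$ is equivalent to $s_j\equiv p^{w_j-w_{j-1}}\pmod{p^e}$ (with $w_0=0$), understood as $0$ whenever $w_j-w_{j-1}\geq e$. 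It therefore suffices to show that with probability at least $1-1/\xi$, $v_p(\Delta_j(\tilde{\Ahat}_k))=w_j$ for every $1\leq j\leq k\leq n$.

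Theorem~\ref{thm:condworks} applied to $\tilde{A}$ already supplies, simultaneously for all $k$ and with the desired probability, the identity $v_p(\det\tilde{\Ahat}_k)=w_k$, which handles the case $j=k$. For $j<k$, I would sandwich $v_p(\Delta_j(\tilde{\Ahat}_k))$ from both sides. By the Cauchy-Binet formula applied to $\tilde{\Ahat}=B_1\tilde{A}B_2$, every $j\times j$ minor of $\tilde{\Ahat}$, and in particular of $\tilde{\Ahat}_k$, is an integer combination of $j\times j$ minors of $\tilde{A}$, so $\Delta_j(\tilde{A})$ divides $\Delta_j(\tilde{\Ahat}_k)$ and thus $v_p(\Delta_j(\tilde{\Ahat}_k))\geq w_j$. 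Conversely, $\det\tilde{\Ahat}_j$ is itself a particular $j\times j$ minor of $\tilde{\Ahat}_k$, so $\Delta_j(\tilde{\Ahat}_k)$ divides $\det\tilde{\Ahat}_j$, giving $v_p(\Delta_j(\tilde{\Ahat}_k))\leq v_p(\det\tilde{\Ahat}_j)=w_j$ by the theorem. The two bounds coincide.

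Reducing modulo $p^e$, the matching $p$-adic valuations of the $\Delta_j(\tilde{\Ahat}_k)$ and the $\Delta_j(\tilde{A})$ for all $j\leq k$ pin down the local Smith form of $\Ahat_k$ over $\ZZ_{p^e}$ as $\diag(s_1,\ldots,s_k)$, and the failure probability $1/\xi$ is inherited verbatim from Theorem~\ref{thm:condworks}. I do not anticipate a genuine obstacle here: the only bookkeeping nuisance is the case $w_j\geq e$, where the ``invariant factor'' $s_j$ collapses to $0$ in $\ZZ_{p^e}$, but this is harmless because equality of integer $p$-adic valuations still forces the corresponding classes in $\ZZ_{p^e}$ to be associates. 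The substance of the argument is just the observation that passing to a leading principal submatrix cannot alter the $p$-adic valuation of any lower-order determinantal divisor once the full-size determinant of that submatrix has been pinned down by the theorem.
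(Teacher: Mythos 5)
Your proof is correct and follows essentially the same route as the paper's: lift $A$ and the conditioners to $\ZZ$, invoke Theorem~\ref{thm:condworks} to pin down $v_p(\det\widehat{A}_k)$ for all $k$, and read off the local Smith form of each $\widehat{A}_k$ from its determinantal divisors. Your divisibility sandwich $\Delta_j(\widetilde{A})\mid\Delta_j(\widetilde{\widehat{A}}_k)\mid\det\widetilde{\widehat{A}}_j$ for $j<k$ makes explicit a step the paper's proof leaves implicit, which is a welcome bit of extra care rather than a deviation.
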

\begin{proof}
  The Smith form of $A$ equals the Smith form of any $\Atil\in\ZZ^\nxn$
  with $\Atil\equiv A\bmod p^e$, reduced modulo $p^e$ (the only
  non-units will be powers of $p$ after the reduction).  Thus, Theorem
  \ref{thm:condworks} implies that the order of $p$ in the $k$th determinantal
  divisor of $A$ equals the order of $p$ in the leading $k\times k$ minor of
  $\Atil$, for all $k$, with probability at least $1-1/\xi$.  This implies
  that $\Ahat_k=\Atil_k \bmod p^e$ will have Smith form $(s_1,\ldots,s_k)$ for
  all $1\leq k\leq n$ where $\Atil_k$ is the leading $k \times k$ 
  minor of $\Atil$, since $\Delta_k=s_1\cdots s_k$ for $1\leq k\leq n$.
\end{proof}

Computationally, if we know that rank of $A$ is at most $m$, then we can work
with
truncated random scaled Toeplitz matrices $B_1\in\ZZ_{p^e}^{m\times n}$ and
$B_2\in\ZZ^{n\times m}.$ 
Then Corollary \ref{cor:leadingsmith} implies that
$\Ahat=B_1AB_2\in\ZZ_{p^e}^{m\times m}$ 
has the same non-zero invariant factors as $A$.

\textbf{Working with Small Primes.}
The conditions for Corollary \ref{cor:leadingsmith} require $p\geq 6n^2\xi$.
For smaller primes the algorithm may well work, but appears much more
difficult to prove.  The following method may be used 
to remedy this.

The approach is to replace $\ZZ$ by a subring of the ring of algebraic integers in a
number field of degree $\eta=\lceil\log_p(6n^2\xi)\rceil$ over $\QQ$, in which
$p$ is inert.  Specifically, let $\Gamma\in\ZZ[y]$ have degree at least $\eta$ be
such that $\Gamma\bmod p$ is irreducible in $\ZZ_p[y]$.  Let $\gamma\in\CC$ be
a root of $\Gamma$.  Then $\ZZ[\gamma]$ is such that $\ZZ[\gamma]/(p^e)$ is a
local ring which contains $\ZZ_{p^e}$, and such that the residue class field
$\ZZ[\gamma]/(p)$ contains more than $6n^2\xi$ elements.  We call
$\ZZ[\gamma]/(p^e)$ the \emph{Galois ring} with $p^\eta$ elements, and denote it by
$\GR(p^e,\eta)$ (see \cite{McD74}).  Like $\ZZ_{p^e}$, $\GR(p^e,\eta)$ is a
local principal ideal ring with maximal prime ideal generated by $p$.

Analogues of Theorem \ref{thm:condworks} (over $\ZZ[\gamma]$) and Corollary
\ref{cor:leadingsmith} (over $\GR(p^e,\eta)$) can be proven similarly.
We state the latter formally, but leave the proofs to the reader.

\begin{corollary}
  \label{cor:notleadingsmith}
  Let $p$ be prime, $e\geq 1$, $\xi\geq 1$ and
  $\eta=\lceil\log_p(6n^2\xi)\rceil$.  Let $\GR(p^e,\eta)=\ZZ[y]/(\Gamma)$ for
  $\Gamma\in\ZZ[y]$ of degree $\eta$ which is irreducible modulo $p$.  Suppose
  $A\in\ZZ_{p^e}^\nxn$ has (local) Smith form
  $\diag(s_1,\ldots,s_n)\in\ZZ_{p^e}^\nxn$.  Let $B_1,B_2\in\GR(p,\eta)^\nxn$
  be formed by random assignments of indeterminates in $\B_1,\B_2$ in
  \eqref{eq:precon} respectively, where choices are made uniformly from
  $L=\{\sum_{0\leq i<\eta}\alpha_iy^i: \alpha_i\in\{0,\ldots,p-1\}\}\bmod p^e$.
  Let $\Ahat=B_1AB_2\in\GR(p,\eta)^\nxn$, and for $1\leq k\leq n$ let
  $\Ahat_k$ be the leading $k\times k$ submatrix of $\Ahat$.  Then with
  probability at least $1-1/\xi$, for \emph{all} $k\in\{1,\ldots, n\}$, the
  local Smith form of $\Ahat_k$ is $\diag(s_1,\ldots,s_k)\in\ZZ_{p^e}^{k\times
    k}$.
\end{corollary}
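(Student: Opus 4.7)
The plan is to replay the proofs of Theorem \ref{thm:condworks} and Corollary \ref{cor:leadingsmith} almost verbatim, with the base ring $\ZZ$ replaced by $\ZZ[\gamma]$ and the final Schwartz--Zippel step carried out over the residue field $\ZZ[\gamma]/(p) \iso \mathbb{F}_{p^\eta}$. The choice $\eta = \lceil\log_p(6n^2\xi)\rceil$ makes this residue field contain at least $6n^2\xi$ elements, which is precisely the inequality previously demanded of $p$. Since $\Gamma$ is irreducible modulo $p$, the ideal $(p) \subset \ZZ[\gamma]$ is maximal and the localization $\ZZ[\gamma]_{(p)}$ is a DVR with uniformizer $p$; this supplies a well-defined $p$-adic valuation $v_p$ on $\ZZ[\gamma]$. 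The sample set $L$ is a complete set of representatives of $\mathbb{F}_{p^\eta}$, so a uniform choice from $L$ reduces modulo $p$ to a uniform choice from $\mathbb{F}_{p^\eta}$.

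First, lift $A$ to $\Atil \in \ZZ[\gamma]^\nxn$ with $\Atil \equiv A \pmod{p^e}$ and define, for each $k$, the $p$-adic $k$th determinantal divisor $\delta_k = \min_{\sigma,\tau \in \calC_k^n} v_p(\Atil\binom{\sigma}{\tau})$; this is independent of the lift and satisfies $\delta_k = v_p(s_1\cdots s_k)$. Next, establish the analogue of Lemma \ref{lem:ddord}: with $\A = \B_1 \Atil \B_2$ and $\psi_k = \A\binom{1\ldots k}{1\ldots k} \in \ZZ[\gamma][\Lambda]$, the minimum $p$-adic valuation of the coefficients of $\psi_k$ equals $\delta_k$. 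Cauchy--Binet expresses $\psi_k$ as a sum of terms $\B_1\binom{1\ldots k}{\sigma}\,\Atil\binom{\sigma}{\tau}\,\B_2\binom{\tau}{1\ldots k}$; by part (i) of the preceding scaled-Toeplitz lemma each such term has $\ZZ$-content $1$, and distinct index pairs $(\sigma,\tau)$ contribute disjoint monomial supports in the indeterminates $\Lambda$. No coefficient cancellation between distinct pairs is possible, so the coefficientwise minimum of $v_p$ is exactly $\min_{\sigma,\tau} v_p(\Atil\binom{\sigma}{\tau}) = \delta_k$.

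The remainder now copies Theorem \ref{thm:condworks} essentially verbatim. The polynomial $p^{-\delta_k}\psi_k$ lies in $\ZZ[\gamma]_{(p)}[\Lambda]$ and reduces modulo $p$ to a nonzero polynomial in $\mathbb{F}_{p^\eta}[\Lambda]$ of total degree at most $6n$. Schwartz--Zippel over $\mathbb{F}_{p^\eta}$ bounds by $6n/p^\eta \leq 1/(n\xi)$ the probability that this reduction vanishes at the random assignment determined by the entries of $B_1,B_2$. A union bound over $k \in \{1,\ldots,n\}$ gives overall success probability at least $(1-1/(n\xi))^n \geq 1 - 1/\xi$, and conditioned on success the $p$-adic valuation of the leading $k\times k$ minor of $\Ahat$ equals $\delta_k = v_p(s_1\cdots s_k)$ for all $k$. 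Exactly as in the proof of Corollary \ref{cor:leadingsmith}, this forces $\Ahat_k$ to have local Smith form $\diag(s_1,\ldots,s_k)$ over $\GR(p^e,\eta)$.

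The main obstacle is the content step: over the Dedekind domain $\ZZ[\gamma]$, which need not be a PID, there is no single-element ``content'' in the sense used for $\ZZ$, so the earlier content argument must be recast as a statement about the local invariant $v_p$ rather than a GCD. Localizing at $(p)$ reaches a DVR, where ideal-theoretic content collapses to the minimum coefficient valuation, and the Cauchy--Binet argument transfers with no essential change; everything else is bookkeeping to confirm that the sample set $L$ and the residue field $\mathbb{F}_{p^\eta}$ match up as required by Schwartz--Zippel.
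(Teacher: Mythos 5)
Your proof is correct and follows exactly the route the paper intends (the paper states that the analogues of Theorem~\ref{thm:condworks} and Corollary~\ref{cor:leadingsmith} ``can be proven similarly'' and leaves the details to the reader). You correctly identify and repair the one genuine wrinkle in the transfer --- that the $\ZZ$-content argument of Lemma~\ref{lem:ddord} must be recast as a statement about the minimum $p$-adic valuation of coefficients, which works because $(p)$ is maximal in $\ZZ[\gamma]$ and the disjoint-support Cauchy--Binet expansion prevents cancellation --- and the Schwartz--Zippel step over the residue field $\ZZ[\gamma]/(p)$ with $p^\eta\geq 6n^2\xi$ elements goes through as you describe.
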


\subsection{Probabilistic validation of dimension reduction}
This section might be titled, ``Escaping the tyranny of Schwartz-Zippel."  
It is frequently the case, as is exemplified in the previous section, 
that an argument for a favourable probability based on the Schwartz-Zippel Lemma 
requires an inconveniently large set for random assignments.  
Experience in practice demonstrates that far smaller sets suffice virtually always.  
In this section we give a method to have a provably low probability of failure 
while making no assumptions about the basic preconditioner.

For a matrix $A$ let $s_k(A)$ denote the $k$-th invariant factor of $A$ (in the order in which $s_i(A) \mid s_{i+1}(A)$, for $1 \leq i < n$). 
Let $S_k(A)$ denote 
the leading $k\times k$ submatrix of the Smith form of $A$, 
$S_k(A) = \diag(s_1(A), s_2(A),\ldots, s_k(A))$.
Let $[A, B]$ denote the side by side join of two conformable matrices. %

\begin{theorem}\label{oneside}
Let $A \in \Lmn$, $Q \in \Lnk$, and $y \in \L^{n}$, with $k \leq \min(m, n)$ and the entries of $y$ being uniform random variables over $\L$. 
Then $AQ$ has the first $k$ invariant factors of $A$ with high probability if $AQ$ and $A[Q,y]$ have the same first $k$ invariant factors.  To be precise, we have the following conditional probability:
\[
\prob(S_k(A) = S_k(AQ) ~|~ S_k(AQ) = S_k([AQ , Ay])) \geq 1 - 1/p.
\]
\end{theorem}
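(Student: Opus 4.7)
The plan is to establish the essential content of the theorem---that for fixed $A$ and $Q$ with $S_k(A) \neq S_k(AQ)$ (the ``bad'' case), the validation event $S_k(AQ) = S_k([AQ, Ay])$ occurs with probability at most $1/p$ over the random $y$. The complementary ``good'' case contributes nothing to $\prob(\{S_k(A) \neq S_k(AQ)\}\cap\{S_k(AQ) = S_k([AQ, Ay])\})$, so this bound yields the stated conditional probability.

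I would first reduce to the case $A = D$, the Smith form $\diag(s_1, \ldots, s_r, 0, \ldots, 0)$, by writing $A = UDV$ with $U, V$ unimodular: then $AQ = UD(VQ)$ and $Ay = UD(Vy)$, the unimodular factor $U$ preserves invariant factors, and $Vy$ remains uniform on $\L^n$. Next, I would rephrase the invariant-factor equality using determinantal divisors: $S_k(X) = S_k(Y)$ iff $\Delta_i(X) = \Delta_i(Y)$ for all $i\leq k$. Applying the Cauchy-Binet formula to $AQ = A\cdot Q$ and $A[Q,y] = A\cdot [Q,y]$ yields the divisibility chain
\[
\Delta_i(A) \mid \Delta_i([AQ, Ay]) \mid \Delta_i(AQ) \qquad \text{for all } i.
\]
Let $j \leq k$ be the smallest index at which $\Delta_j(A)$ strictly divides $\Delta_j(AQ)$, which must exist in the bad case. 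For $i < j$ all three quantities coincide, so the proof reduces to showing that $\Delta_j([AQ, Ay]) \neq \Delta_j(AQ)$ with probability at least $1 - 1/p$.

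The core step is to exhibit one $j \times j$ minor of $[AQ, Ay]$, involving the new column $Ay$, whose $p$-adic valuation drops strictly below $v_p(\Delta_j(AQ))$ with probability at least $1 - 1/p$. Let $\lambda_i = v_p(s_i)$. The equality $\Delta_{j-1}(AQ) = \Delta_{j-1}(A)$ furnishes $\sigma'' \in \calC_{j-1}^n$ minimizing $\sum\lambda_{\sigma''_i}$ and $\tau'' \in \calC_{j-1}^k$ such that the $(j-1)\times(j-1)$ minor $Q\binom{\sigma''}{\tau''}$ is a unit. A short combinatorial argument, splitting on whether $\lambda_{j-1} = \lambda_j$ or $\lambda_{j-1} < \lambda_j$, shows that $\sigma''$ can always be augmented by a single row index $j' \notin \sigma''$ to $\sigma \in \calC_j^n$ that also achieves the minimum $j$-sum. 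The resulting $j\times j$ minor of $[AQ, Ay]$ on rows $\sigma$ and columns $\tau'' \cup \{k+1\}$, expanded by cofactor along the $Ay$ column, equals
\[
s_{\sigma_1}\cdots s_{\sigma_j}\sum_{i=1}^{j}(-1)^{i+j}\,q_i\,y_{\sigma_i}, \qquad q_i = \det Q\binom{\sigma\setminus\sigma_i}{\tau''},
\]
and the coefficient $q_{i^*}$ corresponding to $\sigma_{i^*} = j'$ equals $\pm Q\binom{\sigma''}{\tau''}$, hence is a unit in $\L$.

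The probabilistic conclusion is then immediate. Setting $\alpha = v_p(\Delta_j(AQ)) - v_p(\Delta_j(A)) \geq 1$, the displayed linear form may be written $q_{i^*}y_{j'} + R$ with $R$ independent of $y_{j'}$; after inverting the unit $q_{i^*}$, membership in $p^\alpha\L$ pins $y_{j'}$ to a single residue modulo $p^\alpha$, an event of probability $p^{-\alpha} \leq 1/p$ under uniform $y_{j'}$. I expect the principal obstacle to be the combinatorial extension step---verifying that a minimizing $(j-1)$-row subset can always be enlarged by one index while remaining minimizing at size $j$---whose details depend on the multiplicity pattern of the $\lambda_i$'s in the Smith form of $A$; an edge case to handle is when $s_k(A) = 0$, where one checks that the smallest bad index $j$ necessarily satisfies $j \leq \rank(A)$ so that $\sigma$ can be chosen among rows with $s_{\sigma_i} \neq 0$. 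The remainder is a routine unwinding of Cauchy-Binet and the determinantal-divisor characterization of invariant factors.
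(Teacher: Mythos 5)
Your proposal is correct and follows essentially the same route as the paper's proof: reduce to $A$ in Smith form via the unimodular factorization (noting $Vy$ stays uniform), characterize $S_k$ by determinantal divisors, locate the first index $j$ at which $D_j(A)$ and $D_j(AQ)$ disagree, and exhibit a $j\times j$ minor of $[AQ,Ay]$ whose cofactor expansion along the $y$-column carries a unit coefficient, so that validation can only survive on a single excluded residue of one $y$-coordinate. If anything your version is slightly more careful than the paper's, which tacitly takes the unit $(j-1)$-minor of $Q$ to lie in the leading rows $\iota(j-1)$, whereas your augmentation of a general valuation-minimizing row set $\sigma''$ correctly handles ties among the valuations of the invariant factors.
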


\begin{proof}
Note that $S_k(A) = S_k(B)$ if and only if the $i$-th determinantal divisors are equal: 
$D_i(A) = D_i(B), i \in 1, \ldots, k$, where $D_i(A)$ denotes the GCD of all 
$A\binom{\sigma}{\tau}, \sigma, \tau \in \calC_i^n$. 
We will call any $i\times i$ minor equal to $D_i(A)$ (up to unit multiple) a {\em witness} 
for the determinantal divisor $D_i(A)$.  
Note that the GCD of a set of elements of $\L$ is a member of the set (up to unit multiple), 
so every determinantal divisor has at least one witness.  
For example, for a Smith form $S$, the $j$-th determinantal divisor, if not a unit, 
has exactly one witness, $D_j(S) = S\binom{\iota(j)}{\iota(j)}$, 
where we define $\iota$ by $\iota(j) = (1, 2, \ldots, j)$.

Without loss of generality, suppose for notational symplicity that $k \leq n \leq m$.
Let $A = USV$ be the Smith form factorization of $A$ with 
$S = \diag(s_1, \ldots, s_n)$ and $U,V$ unimodular.
We are concerned with the invariants of $A, AQ, $ and $A[Q, y]$.  
Multiplication by unimodular $U^{-1}$ does not affect invariants, 
so let us consider $SV, SVQ,$ and $S[VQ,Vy]$.  
Because $V$ is unimodular, $Vy$ is a uniform random variable if $y$ is.  
Also, we have made no conditions on $Q$, so we simplify notation, 
substituting $Q$ for $VQ$ and $y$ for $Vy$. 
In other words, without loss of generality we may assume $A = S$ is in Smith form.

We will show the contrapositive of our proposition.  
Suppose $j$ is the first index at which $S_j(A) \neq S_j(AQ)$.
Then $A\binom{\iota(j)}{\iota(j)} \neq A\binom{\iota(j)}{\iota(j)} Q\binom{\iota(j)}{\sigma}$, for all $\sigma \in \calC_j^k$.  Otherwise such a minor would witness equality of $S_j(A)$ and $S_j(AQ)$.
It follows that $p \mid Q\binom{\iota(j)}{\sigma}$. 

Because $j$ is the first such case, there must be an $(j-1) \times (j-1)$ minor $Q\binom{\iota(j-1)}{\tau}$ which is a unit (not divisible by $p$),
where $\iota(j-1), \tau \in \calC_{j-1}^k$.
Let $\tau'$ denote $\tau \cup \{k+1\}$, a column index set for $[Q,y]$ which includes the last column.
Then the expansion of the $\iota$ by $\tau'$ minor of $A[Q, y]$ has as coefficient of $y_j$ the term
$A\binom{\iota(j)}{\iota(j)}Q\binom{\iota(j-1)}{\tau}$, which is a unit.  Thus, for each setting of $y_1, \ldots, y_{j-1}$, there is at most one value modulo $p$ or $y_j$ for which $p \mid [Q,y]\binom{\iota(j)}{\tau}$.  For all other values, this minor witnesses that $D_j(AQ) \neq D_j(A[Q,y]$.  Thus 
when $S_k(A) \neq S_k(AQ)$ there is at most a $1/p$ chance that $S_k(A[Q,y])$ agrees with $S_k(AQ)$.
\end{proof}

The following corollary allows us to verify or disprove the success of preconditioners such as those used in dimension reduction.  This works when the theory of the preconditioner's probability of success is invalidated because of an insufficiently large set from which random values are chosen.
Thus in practice for a small prime, one can skip the domain extension described at the end of Section~\ref{sec:reduction}.  Indeed, no theory of the preconditioner is required at all.  One can try to ``get lucky'' and skip preconditoners altogether.  For instance, apply the corollary where $PAQ$ selects the leading $k\times k$ submatrix of $A$.  If the method validates, then the Smith form was found economically, otherwise try a more thorough preconditioning.

\begin{corollary}
[$\mbox{projection verification}$]
Let matrix $A \in \ZZ^\nxn$ be given and heuristic preconditioners $P \in \ZZ^{k\times n}$ and $Q \in \ZZ^{n\times k}$.
Choose $R_1 \in \ZZ^{n\times c}, R_2 \in \ZZ^{c \times n}$ at random.
Let $B = \left(\begin{matrix}
PAQ &  PAR_1\\
R_2AQ & R_2AR_1 \\
\end{matrix}\right) \in \ZZ^{k+c \times k+c}.
$
If $S_k(PAQ)$ = $S_k(B)$ then these are the first $k$ invariant factors of $A$ with probability greater than $1 - 2/p^c$.
\end{corollary}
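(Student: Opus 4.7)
The plan is to reduce the two-sided verification to two applications of Theorem \ref{oneside} --- one on the right side (using the randomness in $R_1$) and one on the left (using the randomness in $R_2$) --- combined by a union bound. First I would verify the straightforward multi-column generalization of Theorem \ref{oneside}: extending $AQ$ by $c$ independent random columns gives $\prob(S_k(A[Q, R_1]) = S_k(AQ) \mid S_k(AQ) \neq S_k(A)) \leq 1/p^c$, since each new column independently contributes a witnessing minor (whose $y_j$-coefficient is a unit) with probability at least $1 - 1/p$. An analogous multi-row version holds by transpose.

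Let $M = \binom{P}{R_2} A$, so that $B = M[Q, R_1]$ and $PAQ$ is the leading $k \times k$ submatrix of $B$. Write $X$ for the event $S_k(PAQ) = S_k(B)$ and $Y$ for the event $S_k(PAQ) \neq S_k(A)$; the goal is to show $\prob(X \cap Y) \leq 2/p^c$. A key preliminary observation is that $PAQ$ sits inside both $MQ$ and $PA[Q, R_1]$, each of which in turn sits inside $B$, yielding the divisibility chains $\Delta_i(B) \mid \Delta_i(MQ) \mid \Delta_i(PAQ)$ and $\Delta_i(B) \mid \Delta_i(PA[Q, R_1]) \mid \Delta_i(PAQ)$ for all $i \leq k$. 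Under $X$ all of these collapse to equalities.

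The argument then splits on whether $S_k(MQ) = S_k(M)$. In the case $S_k(MQ) \neq S_k(M)$, the multi-column version applied to $M$ (with right preconditioner $Q$ and extension $R_1$) gives $\prob_{R_1}(S_k(B) = S_k(MQ)) \leq 1/p^c$, and since $X$ forces $S_k(B) = S_k(MQ)$, this case contributes at most $1/p^c$. In the case $S_k(MQ) = S_k(M)$, the further chain $\Delta_i(M) \mid \Delta_i(PA) \mid \Delta_i(PAQ) = \Delta_i(MQ) = \Delta_i(M)$ collapses to give $S_k(PA) = S_k(M) = S_k(PAQ)$; under $Y$ this forces $S_k(PA) \neq S_k(A)$, and the multi-row version applied to $A$ (with left preconditioner $P$ and extension $R_2$) gives $\prob_{R_2}(S_k(M) = S_k(PA) \mid S_k(PA) \neq S_k(A)) \leq 1/p^c$. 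A union bound over the two cases yields $\prob(X \cap Y) \leq 2/p^c$.

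The main obstacle is the careful bookkeeping of divisibility chains --- distinguishing submatrix inclusions (which \emph{refine} $\Delta_i$ by introducing additional minors into the gcd) from pre- and post-multiplications (which \emph{coarsen} $\Delta_i$ via Cauchy-Binet) --- so as to confirm that under $X$ the intermediate determinantal divisors really do collapse to those of $PAQ$. Once that bookkeeping is in place and the multi-column/multi-row generalization of Theorem \ref{oneside} is available, the corollary follows cleanly from a two-sided application and a union bound over the two failure modes.
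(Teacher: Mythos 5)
Your proposal is correct and follows essentially the same route as the paper: both define the stacked matrix $C=\bigl(\begin{smallmatrix}PA\\ R_2A\end{smallmatrix}\bigr)$ (your $M$), observe that the hypothesis $S_k(PAQ)=S_k(B)$ collapses the intermediate determinantal divisors, and then apply Theorem~\ref{oneside} $c$ times on the column side and $c$ times on the row side, combining the two $1/p^c$ failure probabilities. Your write-up merely makes the divisibility bookkeeping and the case split more explicit than the paper's terse version.
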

\begin{proof}
Let $C = \left( \begin{matrix} PA\\R_2A \end{matrix}\right)$ so that $B = [CQ, CR_1].$
Note that the condition $S_k(PAQ)$ = $S_k(B)$ implies that all minors of $B$ containing $PAQ$ have those invariants.  In particular $B$ and its left side $CQ$ have the same invariants.
So Theorem \ref{oneside} applies $c$ times for each of the columns of $R_1$.  Because these are independent random vectors the probability that $S_k(CQ) = S_k([CQ,CR_1])$ when $S_k(CQ) \neq S_k(C)$ is at most $1/p^c$.
  Then apply Theorem \ref{oneside} to $A, PA, C$ on the left $c$ times for the $c$ rows of $R_2$.  
Again, the probability of an unfortunate equality is at most $1/p^c$, and otherwise $S_k(PA) = S_k(A)$ is validated.  Thus the overall probability of success is at least $(1- 1/p^c)^2 > 1 - 2/p^c$.
\end{proof}

The idea to use some random dense rows in preconditioning is widespread.  
It was used already in \citep[proof of theorem 1]{Wiedemann:1986}, 
and in a sense it is the basis of block iterative methods.  
The idea to obtain a good probability (especially for small primes) by solving a sparse problem twice, 
first without the few random dense rows and/or columns then with them, was used in \cite{SY09}. 
For integer lattices, there are also some similarities to the additive preconditioners of
\cite{EbeGie00}, and the lattice compression of \cite{CheSto05},
especially in the analysis for small primes dividing invariant factors.

The parameter $c$ can be adjusted to get the desired degree of certainty.  For example, $c =21$ ensures probability of failure less than one in a million, since $2/p^c \leq 1/2^{20} \leq 10^{-6}$ in that case.  Also, one can pad with more random rows and columns to improve weak preconditioners.  Thus if $c = 30$ is used and the matrix with the first 9 of those columns and rows has the same Smith form as the matrix with all 30 random rows and 30 random columns adjoined, we have computed the Smith form with error expected less than once in a million trials.  In effect, we have corrected for some weakness in the heuristic preconditioners with 9 extra rows and columns and then verified with 21 more.

\subsection{Algorithm for the Smith Normal Form}

After reducing the dimension to a value at or near the number of nonzero invariant factors, the following algorithm is applied.
Recall that $\proj$ is the natural projection $\ZZ_{p^e} \to \ZZ_p$.
\begin{algorithm}\label{Smithpe-nullspace}
Smithpe-nullspace\\
Input: a black-box for $B \in \ZZ_{p^e}^{n \times n}$, and
a bound $\ell$ for the number of nonzero invariant factors\\
  Output: $S$, the Smith form of $B$.
  \begin{enumerate}[itemsep=-2pt]
\setcounter{enumi}{-1}
  \item 
Set $A$ to the 
dimension reduction of $B$ to $\ell\times \ell$. 
  \item 
Let $r_0 = \rank(\proj(A))$ over $\ZZ_p$. The nullity of $\proj(A)$ is then $k = \ell-r_0$. 
  \item
Compute $N' \in \ZZ_{p^e}^{\ell \times k}$, a lifting to $\ZZ_{p^e}$ of a right nullspace basis of $\proj(A)$ over $\ZZ_p$. 
  \item Let $N = AN' \in \ZZ_{p^e}^{\ell \times k}$. This involves $k$ matrix vector products with $A$.
Note that $N$ is divisible by $p$.
  \item
Compute the Smith normal form of $N$ over $\ZZ_{p^e}$ by Gaussian elimination:
\[ 
\diag(\underbrace{p,\ldots,p}_{r_1},\underbrace{p^2,\ldots,p^2}_{r_2},\ldots,
\underbrace{p^{e-1},\ldots,p^{e-1}}_{r_{e-1}},
\underbrace{0,\ldots,0}_{r_{e}}).
\]
\item Return $r_0, \dots, r_{e-1}.$
  \end{enumerate}
\end{algorithm}

We will analyze the algorithm holding $e$ and $p$ constant.  Considering them as parameters would introduce a factor of $\softO(e\log(p))$.
Let the cost of matrix-vector product by $B$ is $\bigO(\mu)$.  
Since we are holding $e$ and $p$ constant, this is the same for application to vectors in $\ZZ_p^n$ and in $\ZZ_{p^e}^n$.

Step 0: Toeplitz matrices may be applied to vectors via polynomial multiplication, so the cost of the black-box for $A$ is $\bigO(M(n) + \mu)$.  
$M(n)$ is $\softO(n)$ and we will assume $\mu \geq n$, so the black-box cost of matrix-vector product by $A$ is $\softO(\mu)$.

Step 1: 
The rank over $\ZZ_p$ can be done by a black-box method in $\softO((\ell\mu)\log(\xi))$ to achieve probability of error less than $1/\xi$ \citep{Wiedemann:1986}.
Memory requirement is $\bigO(1)$ vectors in $\ZZ_p^\ell$.

Step 2: 
Let $k = \ell - r_0$ denote the nullity of $A$ modulo $p$.
By black-box methods, $k$ random samples of the nullspace will yield a nullspace basis $N'$. Oversampling can be done and column echelon form computation used to reduce to a basis of $k$ columns if need be.  The cost is $\softO(k(\ell\mu))$.  Space is $\bigO(k\ell)$.  For instance see~\cite{Chen:2002}.

Step 3: The cost of applying $A$ to $N'$ is $\softO(k\mu)$.

Step 4: Any nullspace for $S$ over $\ZZ_p$  is of the form $EW'$,
where $E$ is the last $\ell-r_0$ columns of the identity and $W'$ is $k\times k$ unimodular.  
Then $0 = AN' = USVN = USEW'$ modulo $p$, 
for some unimodular $W'$.  This lifts to a factorization $AN = USEW$ modulo $p^e$ with $U, W$ unimodular.
Thus $AN$ has the Smith form $SE$.
The local Smith form of this dense matrix can be computed by elimination.  
An algorithm running in $\softO(\ell k^{\omega-1})$ is in \cite{Storjohann:2000}. 

\begin{theorem}
Algorithm~\ref{Smithpe-nullspace} is a correct Monte Carlo algorithm for computing Smith normal form
over $\ZZ_{p^e}.$
The time complexity is $\softO(\ell k (k^{\omega - 2} + \mu))$,  
where $k$ is the number of nontrivial (neither 0 nor 1) invariant factors, $\ell$ 
is the reduced dimension (which can be the rank), 
and $\mu$ is the cost of matrix vector product by $B$.
Note that the time complexity is $\softO(\ell k \mu)$ under the very modest assumption that $k^{\omega-2} < \mu$.
The memory requirement is $\bigO(k\ell)$.
\end{theorem}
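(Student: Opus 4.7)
The plan is to verify both the correctness and the complexity assertions by walking through the five steps of Algorithm~\ref{Smithpe-nullspace} in order, since the text immediately preceding the theorem has already justified each one individually. For correctness, I would argue that Step 0 invokes the dimension reduction of Section~\ref{sec:reduction} so that $A$ is an $\ell\times\ell$ matrix whose leading invariant factors coincide (with controllably high probability, by Corollary~\ref{cor:leadingsmith}, or by Corollary~\ref{cor:notleadingsmith} over a Galois ring if $p$ is small) with the first $\ell$ invariants of $B$. Since $\ell$ is chosen at or above the rank, this disposes of the zero invariants. Steps 1--3 then dispose of the unit invariants: if $A = USV$ is a Smith-form factorization with $S = \diag(s_1,\ldots,s_\ell)$, then any $\ZZ_p$-nullspace basis of $\proj(A)$ lifts as $N' = V^{-1}EW'$ modulo $p$ for $E$ the last $k$ columns of the identity and $W'$ unimodular over $\ZZ_p$. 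Hensel-style lifting gives $AN' = USEW$ over $\ZZpe$ for some unimodular $W$, so the nonzero invariants of $AN'$ are precisely the nontrivial ones $s_{r_0+1},\ldots,s_\ell$. Step 4 then reads off the multiplicities $r_1,\ldots,r_{e-1}$ from a dense local Smith form of the $\ell\times k$ matrix $AN'$.

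For the time complexity I would add the costs already tallied in the discussion preceding the theorem. Step 0's preconditioning is absorbed in the black-box cost of $A$, which is $\softO(\mu)$ per matrix-vector product since scaled Toeplitz products cost $\softO(n)$ and we assume $\mu\geq n$. Step 1 uses a Wiedemann-style rank computation in $\softO(\ell\mu)$. Step 2, producing a nullspace basis of $k = \ell - r_0$ vectors, costs $\softO(k\ell\mu)$ by standard black-box nullspace sampling (e.g., \cite{Chen:2002}), which dominates Step 3's $\softO(k\mu)$ for forming $AN'$. Step 4 applies Storjohann's dense local Smith form to an $\ell\times k$ matrix at cost $\softO(\ell k^{\omega-1})$. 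Summing, the total is $\softO(\ell k \mu + \ell k^{\omega-1}) = \softO(\ell k(\mu + k^{\omega-2}))$, and under $k^{\omega-2}<\mu$ this collapses to $\softO(\ell k \mu)$.

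The memory bound is easy once the two largest objects are identified: the nullspace basis $N'$ and its image $AN'$, both of size $\ell\times k$, which accounts for $\bigO(k\ell)$ ring elements; all other intermediates (black-box state, rank-algorithm vectors) are $\bigO(\ell)$, so they are absorbed. The main obstacle in presenting the proof cleanly will be the lifting argument in Step 4 — it must be spelled out that a nullspace basis over $\ZZ_p$ really does expose $SE$ as the Smith form of $AN'$ over $\ZZpe$, not merely over $\ZZ_p$. This is where one must observe that the columns of $N'$ span the $p$-torsion of the cokernel direction transversally, so that multiplying by $A$ exactly reveals the nontrivial $p$-power invariants without mixing in unit factors. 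Everything else is bookkeeping from the preceding sections.
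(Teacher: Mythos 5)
Your proposal follows the paper's own analysis essentially verbatim: the same per-step cost tally (Toeplitz preconditioning absorbed into the $\softO(\mu)$ black-box cost, Wiedemann-style rank in $\softO(\ell\mu)$, nullspace sampling in $\softO(k\ell\mu)$, forming $AN'$ in $\softO(k\mu)$, and Storjohann's dense local Smith form in $\softO(\ell k^{\omega-1})$), the same correctness argument via the factorization $AN' = USEW$ with $U,W$ unimodular, and the same $\bigO(k\ell)$ memory accounting. The delicate point you flag at the end --- that an arbitrary lift $N'$ of the mod-$p$ nullspace basis really exposes $SE$ as the Smith form of $AN'$ over $\ZZpe$ --- is precisely the step the paper itself asserts with only a one-line justification, so you are matching, not missing, the published argument.
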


\subsection{A conjecture about {\large $p$}-adic carries}

While attempting a general approach for the $\ZZpe$ case without using
dense elimination and while using only rank computations over the
field $\ZZ_p$, we discovered an interesting pattern about ranks of matrices
over $\ZZpe$ which could be of independent interest.
To put the discovered conjecture in a proper context, we first introduce the
attempt which lead to discovering it, then we present the conjecture
which discourages this approach.

Consider an element $\alpha \in \ZZpe$ written in terms
of its \emph{unique $p$-adic expansion} as $\alpha = \alpha_0  + \alpha_1 p + \ldots + \alpha_{e-1}
p^{e-1}$ where $\alpha_i \in \ZZ_p$ for $0 \le i \le e-1.$ This representation
extends naturally to matrices over $\ZZpe$, i.e.,
$A = A_0 + p A_1 + \ldots + p^{e-1} A_{e-1},$ where $A_i \in \ZZ_p^{n \times n}.$
For clarity of presentation and limited space, we only confine the
discussion to the case $e = 2.$ Expanding $A = USV,$
we get $(A_0 + p A_1) = (U_0 + p U_1)(S_0 + pS_1)(V_0 + pV_1),$ 
where $A_0  =  U_0S_0V_0 \bmod p$ and
\begin{eqnarray}
 A_1 & = & U_1S_0V_0 + U_0S_0V_1 + U_0S_1V_0 \label{Eq:invalid}
\end{eqnarray}
where $\rank(S_0) = r_0,$ $\rank(S_1) = r_1,$ which are the multiplicities of
$1$'s and $p$'s in the Smith form diagonal, respectively.
Furthermore, with appropriate preconditioning\footnote{Details are outside
the scope of presenting this conjecture.},
$\rank(A_0)$ is proportional to $r_0$, and $\rank(A_1)$ is proportional to
$r_0 + 2r_1.$ Hence, this formulation leads to a belief that we can easily
\emph{isolate} $A_0, A_1,$ compute their ranks over $\ZZ_p$, and discover 
multiplicities of the invariant factors.
However, by closer inspection, equation~\ref{Eq:invalid} is in fact:
\[ A_1 = U_1S_0V_0 + U_0S_0V_1 + U_0S_1V_0 + \overbrace{U_0S_0V_0 \quo p}^{\text{carry}}, \]
where the extra term, $(U_0S_0V_0 \quo p),$ is introduced by the fact that operations
in $\ZZpe$ exhibit carries. As a simple example $5 \cdot 5$ over $\ZZ_{3^4}$
is $(2 + p)(2 + p) = 1 + 2 p + 2 p^2,$ where $2 p^2$ is a carry term.
These carries contribute to the overall ranks of matrix expressions, and present
a challenge in computing the Smith form. We hoped to reasonably 
bound the ranks of these carries, so a working algorithm could be developed which
reduces Smith form computation to efficient rank computations over $\ZZ_p$.
This approach would be superior to algorithms presented in previous section and
literature since it preserves sparsity.
It is worth noting that in the polynomial case, this approach
yields a working algorithm. The following conjecture illustrates why the 
$p$-adic case is more difficult to resolve in this way.

\begin{conjecture}\label{RankCarryConjecture}
Assume $p$ is a prime, $U,S,V \in \ZZ_{p}^{n \times n}$ such that $U,V$ are
 invertible, $S = \diag(1,\ldots,1,0,\ldots,0)$ and $\rank(S) = r$.
 Let $M = USV = M_0 + M_1 p + \ldots + M_s p^s,$
 where $s = \bigO(\log_p{r})$ and $M_i \in \ZZ_{p}^{n \times n}.$
 We conjecture that when $p = 2,$ \[ \rank(M_i) \leq \binom{r}{2^i}, \]
 and when $p = 2k + 1,$ we conjecture that
  \[ \rank(M_1) \leq
         \sum_{i=0}^{k}{ \binom{r+2i}{2i+1} } + \binom{r+2k - 1}{2k} -2r.
  \]
  Furthermore, in the generic case where
  $U,V$ are uniformly chosen at random in $\ZZ_p$, and $n$ is 
  arbitrarily large, the ranks are \emph{equal} to bounds above.
\end{conjecture}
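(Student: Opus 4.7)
\textbf{Proof proposal for Conjecture~\ref{RankCarryConjecture}.}
My plan is to reduce everything to a bilinear decomposition of a single scalar ``digit-extraction'' function and then count the independent rank-one contributions it produces. Write $S = \diag(1,\ldots,1,0,\ldots,0)$ with $r$ ones, so that each entry of $M=USV$ is $m_{ab} = \sum_{k=1}^{r} u_{ak} v_{kb}$, where the sum is over ordinary integers with $u_{ak},v_{kb}\in\{0,\ldots,p-1\}$. The crucial starting identity is the Lucas-theorem corollary
\[
M_i[a,b] \;=\; \binom{m_{ab}}{p^i} \bmod p,
\]
which identifies the $i$-th $p$-adic digit of $m_{ab}$ with a binomial coefficient mod $p$. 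Then the Vandermonde--Chu identity gives
\[
\binom{m_{ab}}{p^i} \;=\; \sum_{\substack{n_1+\cdots+n_r=p^i\\ n_k\ge 0}} \prod_{k=1}^r \binom{u_{ak}v_{kb}}{n_k},
\]
so $M_i$ is (mod $p$) a sum of matrices each of which factors as a product depending only on row $a$ of $U$ and column $b$ of $V$.

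Each factor $\binom{u_{ak}v_{kb}}{n_k}$ is a polynomial in $u_{ak}v_{kb}$, so as a function on $\ZZ_p\times\ZZ_p$ it decomposes (after reduction using $u^p=u$, $v^p=v$) into a finite sum $\sum_\ell \phi_{n_k,\ell}(u_{ak})\psi_{n_k,\ell}(v_{kb})$. Substituting this expansion term-by-term and multiplying out across the $r$ indices $k$, the full expression for $M_i[a,b]$ becomes a sum of rank-one outer products $g(\text{row}_a U)\,h(\text{col}_b V)$, where each $g$ and $h$ is a monomial in the prescribed entries. The upper bound on $\rank(M_i)$ is then simply the number of surviving independent $(g,h)$ pairs.

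For $p=2$, the step above is immediate: $u_{ak}v_{kb}\in\{0,1\}$, so $\binom{u_{ak}v_{kb}}{n_k}$ vanishes unless $n_k\in\{0,1\}$, leaving exactly $\binom{r}{2^i}$ multi-indices in the Vandermonde sum, and each produces one inseparable rank-one outer product $\prod_{k\in T}u_{ak}\cdot\prod_{k\in T}v_{kb}$ indexed by a $2^i$-subset $T\subseteq[r]$. For odd $p=2k+1$ and $i=1$, the decomposition of $\binom{u v}{n}$ for $1\le n\le p-1$ uses more bilinear modes, and some different $(g,h)$-monomials collapse because of the Fermat-like relations $u^p=u$; the claimed formula $\sum_{i=0}^k\binom{r+2i}{2i+1}+\binom{r+2k-1}{2k}-2r$ should arise as a precise count of nonequivalent monomial shapes $\prod_k u_{ak}^{s_k}$ paired with $\prod_k v_{kb}^{t_k}$ after these identifications, with the $-2r$ accounting for the part of $m_{ab}\bmod p$ and its linear contribution that has already been subtracted out of the $\lfloor m_{ab}/p\rfloor$ computation.

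For the generic equality claim, the standard route is to fix the bilinear decomposition above as a finite list of outer products $x_\ell y_\ell^{\top}$ over $\ZZ_p$, assemble them into two tall matrices $X,Y$, and observe that their Schur-like product realizes $M_i$ exactly. Linear independence of these outer products for large $n$ and uniform $U,V$ then follows from a Schwartz--Zippel argument applied to the corresponding Vandermonde-style Gram determinant, which is a nonzero polynomial in the $U,V$ entries; large $n$ guarantees enough rows and columns for the generic rank to be attained. The main obstacle I expect is Step~3 specialized to odd $p$: giving a clean, canonical bilinear basis for the function $(u,v)\mapsto\binom{uv}{n}\bmod p$ on $\ZZ_p^2$, and then bookkeeping the coincidences among monomials after $u^p=u,\ v^p=v$ precisely enough to recover the stated formula. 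The reason the conjecture only covers $M_1$ (and not $M_j$ for $j\ge 2$) is that for $j\ge 2$ the corresponding binomial $\binom{m_{ab}}{p^j}$ pulls in iterated carries whose bilinear rank requires tracking nested Vandermonde expansions, and neat closed forms seem to disappear. Extending the present approach to that setting is where I would expect the real difficulty to lie.
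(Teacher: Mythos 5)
First, be aware that the paper offers no proof of this statement: it is presented explicitly as a conjecture whose only support is experimental (the authors say the formulas were obtained by ``reverse engineering'' sequences of computed ranks), so there is no proof of record to compare yours against. Judged on its own terms, your proposal contains one genuinely solid piece and two real gaps. The solid piece is the $p=2$ upper bound: the chain $M_i[a,b]\equiv\binom{m_{ab}}{2^i}\pmod 2$ (Lucas), followed by Vandermonde--Chu over the integers with $x_k=u_{ak}v_{kb}\in\{0,1\}$, really does exhibit $M_i$ as a sum of the $\binom{r}{2^i}$ rank-one matrices $\bigl(\prod_{k\in T}u_{ak}\bigr)_a\bigl(\prod_{k\in T}v_{kb}\bigr)_b^{\top}$ indexed by $T\subseteq\{1,\ldots,r\}$ with $|T|=2^i$, hence proves $\rank(M_i)\le\binom{r}{2^i}$. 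Written out carefully, that alone would be more than the paper establishes.

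The gaps are exactly where you say you expect difficulty, and they are not minor. (1) For odd $p=2k+1$ you never compute the bilinear decomposition of $(u,v)\mapsto\binom{uv}{n}\bmod p$ for $0\le n\le p$, nor carry out the count of distinct monomial pairs surviving after reduction by $u^p=u$, $v^p=v$; you only assert that the conjectured formula ``should arise'' from this count. That count \emph{is} the entire content of the odd-$p$ claim, and it is far from evident that it equals $\sum_{i=0}^{k}\binom{r+2i}{2i+1}+\binom{r+2k-1}{2k}-2r$; until the bookkeeping is done you have proved nothing for odd $p$. (2) The generic-equality claim requires the rank-one summands to be linearly independent for random $U,V$ and large $n$. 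For $p=2$ this reduces to linear independence of evaluations of distinct multilinear monomials at random points, which is plausible but must actually be argued (including the conditioning of $U,V$ on invertibility, and the possibility that distinct multi-indices produce coinciding monomial pairs whose coefficients cancel mod $p$). For odd $p$ the ``identifications'' you gesture at are precisely where terms can collapse and the bound can fail to be attained. In short: a promising attack that genuinely settles the $p=2$ upper bound, but not a proof of the conjecture as stated.
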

This conjecture shows that a large dimension product of matrices with entries in 
$\ZZ_p$ of small rank 
can still have very large, but not full, rank carry matrices. These carries 
will impact many digits in the expanded product.
The evidence for the conjecture is experimental\footnote{Code is available
at: \url{http://cs.uwaterloo.ca/~mwg/smithpe/}}.
We developed the formulas above by reverse engineering sequences of computed
ranks resulting from experiments with different primes and matrices of different
dimensions.

In the generic case of equality,
the conjecture could be used to generate an algorithm for small primes, e.g.
when $p = 2.$
However, without further refinements, this approach yields 
an exponential time algorithm which is prohibitive.

\section{Conclusion}\label{Section:Conclusion}

We have given efficient algorithms for computing Smith Normal Form over two local rings: $\Fzfe$ and $\ZZ/p^e\ZZ$.  These are useful components for SNF algorithms over $\F[z]$ and $\ZZ,$ respectively.  These algorithms are efficient in the black-box model, which means that they are well suited to sparse and structured matrices.
The integer algorithm is output sensitive.  Its memory and time usage grows in proportion to the number of nontrivial invariant factors.  A memory efficient algorithm without that restriction has not been found.  In addition, we gave a conjecture about the rank of carries resulting from multiplying single-digit 
$p$-adic matrices.

\def\bibfont{\small}
\def\bibsep{4pt}

\def\newblock{\hskip .11em plus .33em minus .07em}


\begin{thebibliography}{35}
\providecommand{\natexlab}[1]{#1}
\providecommand{\url}[1]{\texttt{#1}}
\expandafter\ifx\csname urlstyle\endcsname\relax
  \providecommand{\doi}[1]{doi: #1}\else
  \providecommand{\doi}{doi: \begingroup \urlstyle{rm}\Url}\fi

\bibitem[Babson et~al.(1999)Babson, Bj{\"o}rner, Linusson, Shareshian, and
  Welker]{Babson:1999}
E.~Babson, A.~Bj{\"o}rner, S.~Linusson, J.~Shareshian, and V.~Welker.
\newblock Complexes of not $i$-connected graphs.
\newblock \emph{Topology}, 38\penalty0 (2):\penalty0 271--299, 1999.

\bibitem[Bj{\"o}rner and Welker(1999)]{Bjorner:1999}
A.~Bj{\"o}rner and V.~Welker.
\newblock Complexes of directed graphs.
\newblock \emph{SIAM Journal on Discrete Mathematics}, 12\penalty0
  (4):\penalty0 413--424, November 1999.

\bibitem[Chandler(2011)]{Chandler:2011}
D.~B. Chandler.
\newblock Private communication, 2011.
\newblock Symplectic matrix examples generated by Peter Vandendriessche.

\bibitem[Chandler et~al.(2010)Chandler, Sin, and Xiang]{Chandler:2010}
D.~B. Chandler, P.~Sin, and Q.~Xiang.
\newblock Incidence modules for symplectic spaces in characteristic two.
\newblock \emph{J. Algebra}, 323:\penalty0 3157--3181, 2010.

\bibitem[Chen et~al.(2002)Chen, Eberly, Kaltofen, Saunders, Turner, and
  Villard]{Chen:2002}
L.~Chen, W.~Eberly, E.~Kaltofen, B.~D. Saunders, W.~J. Turner, and G.~Villard.
\newblock Efficient matrix preconditioners for black box linear algebra.
\newblock \emph{Linear Algebra and its Applications}, 343--344:\penalty0
  119--146, 2002.

\bibitem[Chen and Storjohann(2005)]{CheSto05}
Z.~Chen and A.~Storjohann.
\newblock Lattice compression of integer matrices.
\newblock \emph{Journal of Symbolic Computation}, 2005.
\newblock Accepted for publication.

\bibitem[Chou and Collins(1982)]{Chou:1982}
T.~J. Chou and G.~E. Collins.
\newblock Algorithms for the solution of systems of linear diophantine
  equations.
\newblock \emph{SIAM Journal on Computing}, 11\penalty0 (4):\penalty0 687--708,
  1982.

\bibitem[Coppersmith and Winograd(1990)]{Coppersmith:1990}
D.~Coppersmith and S.~Winograd.
\newblock Matrix multiplication via arithmetic progressions.
\newblock \emph{Journal of Symbolic Computation}, 9\penalty0 (3):\penalty0
  251--280, 1990.

\bibitem[Dumas et~al.(2000)Dumas, Saunders, and Villard]{DSV00}
J.-G. Dumas, B.~D. Saunders, and G.~Villard.
\newblock Integer {S}mith form via the valence: Experience with large sparse
  matrices from homology.
\newblock In \emph{Proc. 2000 Internat. Symp. Symbolic Algebraic Comput.
  {ISSAC}'00}, pages 95--105. ACM Press, 2000.

\bibitem[Dumas et~al.(2001)Dumas, Saunders, and Villard]{DSV01}
J.-G. Dumas, B.~D. Saunders, and G.~Villard.
\newblock On efficient sparse integer matrix {S}mith normal form computations.
\newblock \emph{Journal of Symbolic Computation}, 32:\penalty0 71--99, 2001.

\bibitem[Dumas et~al.(2003)Dumas, Heckenbach, Saunders, and Welker]{Dumas:2003}
J.-G. Dumas, F.~Heckenbach, B.~D. Saunders, and V.~Welker.
\newblock Computing simplicial homology based on efficient {Smith} normal form
  algorithms.
\newblock In \emph{Algebra, Geometry, and Software Systems}, pages 177--206.
  Springer, 2003.

\bibitem[Dumas et~al.(2009)Dumas, Pernet, and Saunders]{Dumas:2009}
J.-G. Dumas, C.~Pernet, and B.~D. Saunders.
\newblock On finding multiplicities of characteristic polynomial factors of
  black-box matrices.
\newblock In \emph{Proc. International Symposium on Symbolic and Algebraic
  Computation (ISSAC'09)}, pages 135--142. ACM, 2009.

\bibitem[Eberly(2004)]{Eberly:2004}
W.~Eberly.
\newblock Reliable {K}rylov-based algorithms for matrix null space and rank.
\newblock In \emph{Proc. ISSAC 2004}, pages 127--134, 2004.

\bibitem[Eberly et~al.(2000)Eberly, Giesbrecht, and Villard]{EbeGie00}
W.~Eberly, M.~Giesbrecht, and G.~Villard.
\newblock On computing the determinant and smith form of an integer matrix.
\newblock In \emph{Proc. 41st Annual IEEE Symposium on Foundations of Computer
  Science (FOCS'2000)}, pages 675--687, 2000.

\bibitem[Eberly et~al.(2007)Eberly, Giesbrecht, Giorgi, Storjohann, and
  Villard]{Eberly:2007}
W.~Eberly, M.~Giesbrecht, P.~Giorgi, A.~Storjohann, and G.~Villard.
\newblock Faster inversion and other black box matrix computations using
  efficient block projections.
\newblock In \emph{Proc. International Symposium on Symbolic and Algebraic
  Computation (ISSAC'07)}, pages 143--150, 2007.

\bibitem[Giesbrecht(2001)]{Gie01}
M.~Giesbrecht.
\newblock Fast computation of the {S}mith form of a sparse integer matrix.
\newblock \emph{Computational Complexity}, 10\penalty0 (1):\penalty0 41--69,
  2001.

\bibitem[Hafner and McCurley(1989)]{Hafner:1989}
J.~L. Hafner and K.~S. McCurley.
\newblock A rigorous subexponential algorithm for computation of class groups.
\newblock \emph{Journal of the American Mathematical Society}, 2\penalty0
  (4):\penalty0 837--850, 1989.

\bibitem[Hu(1969)]{Hu:1969}
T.~C. Hu.
\newblock \emph{Integer programming and network flows}.
\newblock Addison-Wesley, Reading, Mass., 1969.

\bibitem[Kailath(1980)]{Kailath:1980}
T.~Kailath.
\newblock \emph{Linear Systems}.
\newblock Prentice-Hall, Englewood Cliffs, NJ, 1980.

\bibitem[Kaltofen and Saunders(1991)]{Kaltofen:1991}
E.~Kaltofen and B.~D. Saunders.
\newblock On {Wiedemann{\textquoteright}s} method of solving sparse linear
  systems.
\newblock In \emph{Applied Algebra, Algebraic Algorithms and Error--Correcting
  Codes ({AAECC} '91)}, volume 539 of \emph{LNCS}, pages 29--38, 1991.

\bibitem[Kaltofen et~al.(1990)Kaltofen, Krishnamoorthy, and
  Saunders]{Kaltofen:1990}
E.~Kaltofen, M.S. Krishnamoorthy, and B.~David Saunders.
\newblock Parallel algorithms for matrix normal forms.
\newblock \emph{Linear Algebra and its Applications}, 136:\penalty0 189--208,
  Jul 1990.

\bibitem[Kaplansky(1949)]{Kaplansky:1949}
I.~Kaplansky.
\newblock Elementary divisors and modules.
\newblock \emph{Transactions of the American Mathematical Society}, 66\penalty0
  (2):\penalty0 464--491, 1949.

\bibitem[McDonald(1974)]{McD74}
B.~McDonald.
\newblock \emph{Finite Rings with Identity}.
\newblock Marcel Dekker, Inc., New York, 1974.

\bibitem[McMillan(1952)]{McMillan:1952}
B.~McMillan.
\newblock On systems of linear indeterminate equations and congruences.
\newblock \emph{Bell System Technical Journal}, 31:\penalty0 541--600, 1952.

\bibitem[Newman(1972)]{Newman:1972}
M.~Newman.
\newblock \emph{Integral Matrices}.
\newblock Academic Press, New York, NY, USA, 1972.

\bibitem[Saunders and Wan(2004)]{Wan:2004}
B.~D. Saunders and Z.~Wan.
\newblock {Smith} normal form of dense integer matrices fast algorithms into
  practice.
\newblock In \emph{Proc. 2004 Internat. Symp. Symbolic Algebraic Comput.
  {ISSAC}'04}, pages 274--281. ACM Press, 2004.

\bibitem[Saunders and Youse(2009)]{SY09}
B.~D. Saunders and B.~Youse.
\newblock Large matrix, small rank.
\newblock In \emph{Proc. International Symposium on Symbolic and Algebraic
  Computation (ISSAC'09)}, pages 317--324, 2009.

\bibitem[Schwartz(1980)]{Schwartz:1980}
J.~T. Schwartz.
\newblock Fast probabilistic algorithms for verification of polynomial
  identities.
\newblock \emph{Journal of the ACM}, 27\penalty0 (4):\penalty0 701--717,
  October 1980.

\bibitem[Smith(1861)]{Smith:1861}
H.~J.~S. Smith.
\newblock On systems of linear indeterminate equations and congruences.
\newblock \emph{Philosophical Transactions of the Royal Society of London},
  151:\penalty0 293--326, 1861.

\bibitem[Storjohann(2000)]{Storjohann:2000}
A.~Storjohann.
\newblock \emph{Algorithms for matrix canonical forms}.
\newblock PhD thesis, Swiss Federal Institute of Technology -- ETH, Z\"urich,
  2000.

\bibitem[Storjohann(2003)]{Arne:2003}
A.~Storjohann.
\newblock High-order lifting and integrality certification.
\newblock \emph{Journal of Symbolic Computation}, 36\penalty0 (3--4):\penalty0
  613--648, 2003.

\bibitem[Wallis et~al.(1972)Wallis, Street, and Wallis]{Wallis:1972}
W.D. Wallis, A.~P. Street, and J.~S. Wallis.
\newblock \emph{Combinatorics: room squares, sum-free sets, Hadamard matrices}.
\newblock Lecture notes in mathematics. Springer, Berlin, 1972.

\bibitem[Wiedemann(1986)]{Wiedemann:1986}
D.~Wiedemann.
\newblock Solving sparse linear equations over finite fields.
\newblock \emph{IEEE Transactions on Information Theory}, IT-32:\penalty0
  54--62, January 1986.

\bibitem[Wilkening and Yu(2011)]{Wilkening:2011}
J.~Wilkening and J.~Yu.
\newblock A local construction of the {S}mith normal form of a matrix
  polynomial.
\newblock \emph{Journal of Symbolic Computation}, 46\penalty0 (1):\penalty0
  1--22, January 2011.

\bibitem[Zippel(1979)]{Zippel:1979}
R.~Zippel.
\newblock Probabilistic algorithms for sparse polynomials.
\newblock In \emph{Symbolic and Algebraic Computation}, volume~72 of
  \emph{LNCS}, pages 216--226. Springer Berlin, 1979.

\end{thebibliography}
\end{document}